\newtheorem{theorem}{Theorem}
\newtheorem{lemma}{Lemma}
\title{Universal Wait-Free Memory Reclamation}
\author{Ruslan Nikolaev, Binoy Ravindran}
\affil{rnikola@vt.edu, binoy@vt.edu\\Virginia Tech, USA}
\date{}
\begin{document}

\maketitle

\begin{abstract}
In this paper, we present a universal memory reclamation scheme, \textit{Wait-Free Eras} (WFE), for deleted memory blocks in wait-free concurrent data structures. WFE's key innovation is that it is completely wait-free. Although some prior techniques provide similar guarantees for certain data structures, they lack support for \textit{arbitrary} wait-free data structures.
Consequently, developers are typically
forced to marry their wait-free data structures with lock-free
Hazard Pointers or (potentially blocking) epoch-based memory 
reclamation. Since both these schemes provide weaker progress guarantees, they essentially forfeit the strong progress guarantee of wait-free data structures. Though making the original Hazard Pointers scheme or epoch-based reclamation completely wait-free seems infeasible, we achieved this goal with a more recent, (lock-free) Hazard Eras scheme, which we extend to guarantee wait-freedom. As this extension is non-trivial, we discuss all challenges pertaining to the construction of universal wait-free memory reclamation.

WFE is implementable
on ubiquitous x86\_64 and AArch64 (ARM) architectures.
Its API is mostly compatible with
Hazard Pointers, which allows easy transitioning of existing
data structures into WFE. Our experimental evaluations show that WFE's performance is close to epoch-based reclamation and almost matches the original Hazard Eras scheme, while providing the  stronger wait-free progress guarantee.

\end{abstract}

\def\keywords{\vspace{.5em}
{\noindent{\textit{Keywords}:\,\relax%
}}}
\def\endkeywords{\par}

\keywords{wait-free, non-blocking, memory reclamation, hazard pointers, hazard eras}

\section{Introduction}

Most modern general purpose systems use the shared-memo\-ry architecture, which stipulates solving the synchronization problem when accessing shared data. The easiest way to solve this problem is to use locks, but due to scalability issues, non-blocking data structures have been studied over the years. A downside of non-blocking data structures is that they require a special \textit{memory reclamation} scheme. Whereas mutual exclusion locks can guarantee that no other thread is using a memory block (node) that is in the process of deletion, this is generally not true for non-blocking designs.

\textit{Wait-freedom}, the strongest of non-blocking progress guarantees, is critically important in many latency-sensitive applications where execution time of all operations must be  bounded~\cite{Herlihy:2008:AMP:1734069}. In wait-free algorithms, all threads must eventually complete any operation after a bounded number of steps. Nonetheless, such algorithms have not had significant practical traction due to a number of reasons. Traditionally, wait-free algorithms were difficult to design and were much slower than their lock-free counterparts. Kogan and Petrank's  \textit{fast-path-slow-path} methodology~\cite{Kogan:2012:MCF:2145816.2145835} largely solved the problem of creating efficient wait-free algorithms. The design of wait-free algorithms, however, is still challenging, 
as the Kogan-Petrank methodology implicitly assumes wait-free memory reclamation. A case in point:  Yang and Mellor-Crummey's 
wait-free queue~\cite{ChaoranWFQ} uses the Kogan-Petrank  methodology. But, as pointed out by Ramalhete and Correia~\cite{pedroWFQUEUEFULL},~\cite{ChaoranWFQ}'s design is flawed in its memory reclamation approach which, strictly speaking, forfeits wait-freedom.

Although a number of memory reclamation  techniques~\cite{epoch1,epoch2,HPPaper,ThreadScan,ForkScan,Balmau:2016:FRM:2935764.2935790,DEBRAPaper,refcount2,refcount3,IBRPaper,Hyaline,HEPaper,Cohen:2015:AMR:2814270.2814298,Cohen:2018:DSD:3288538.3276513,Braginsky:2013:DAL:2486159.2486184,Herlihy:2002:ROP:645959.676129,Herlihy:2005:NMM:1062247.1062249} have been proposed, only a fraction of them can be used for arbitrary data structures and are truly non-blocking~\cite{HEPaper,HPPaper,IBRPaper,Hyaline,refcount2,refcount3,Cohen:2018:DSD:3288538.3276513,Herlihy:2002:ROP:645959.676129,Herlihy:2005:NMM:1062247.1062249}. At
present, no \textit{universal} memory reclamation technique exists that guarantees wait-freedom for \textit{arbitrary} wait-free data structures. Typically, prior efforts on wait-free data structures have ignored the memory reclamation
problem entirely or have harnessed lock-free memory reclamation schemes such as Hazard Pointers~\cite{HPPaper}, essentially forfeiting strict wait-freedom guarantees. The recently proposed Ramalhete and Correia's wait-free queue~\cite{pedroWFQUEUE} can be implemented using Hazard Pointers, but the approach is too specific to the queue's design and cannot be applied for other data structures.~\cite{OneFile} presents wait-free memory reclamation for software transactional memory (STM), but the work does not consider wait-free memory reclamation for handcrafted data structures. Although STM is an important general synchronization technique, handcrafted data structures usually perform better and still need wait-free memory reclamation.

We present a universal scheme, \textit{Wait-Free Eras} (WFE), that
solves the wait-free memory reclamation problem for \textit{arbitrary} data structures. For the first time, wait-free data structures such as the Kogan-Petrank queue~\cite{kpWFQUEUE} can be implemented fully wait-free using WFE.
Additionally, WFE can help simplify memory reclamation for existing wait-free
data structures which use Hazard Pointers in a special way to guarantee
wait-freedom.

WFE is based on the recent Hazard Eras~\cite{HEPaper} memory reclamation approach, which is lock-free. We demonstrate how Hazard Eras can be extended to guarantee wait-freedom.
In our evaluation, we observe that WFE's performance is close to epoch-based reclamation and almost matches the original Hazard Eras scheme.

\begin{figure*}
\begin{subfigure}{.49\textwidth}
\includegraphics[width=\columnwidth]{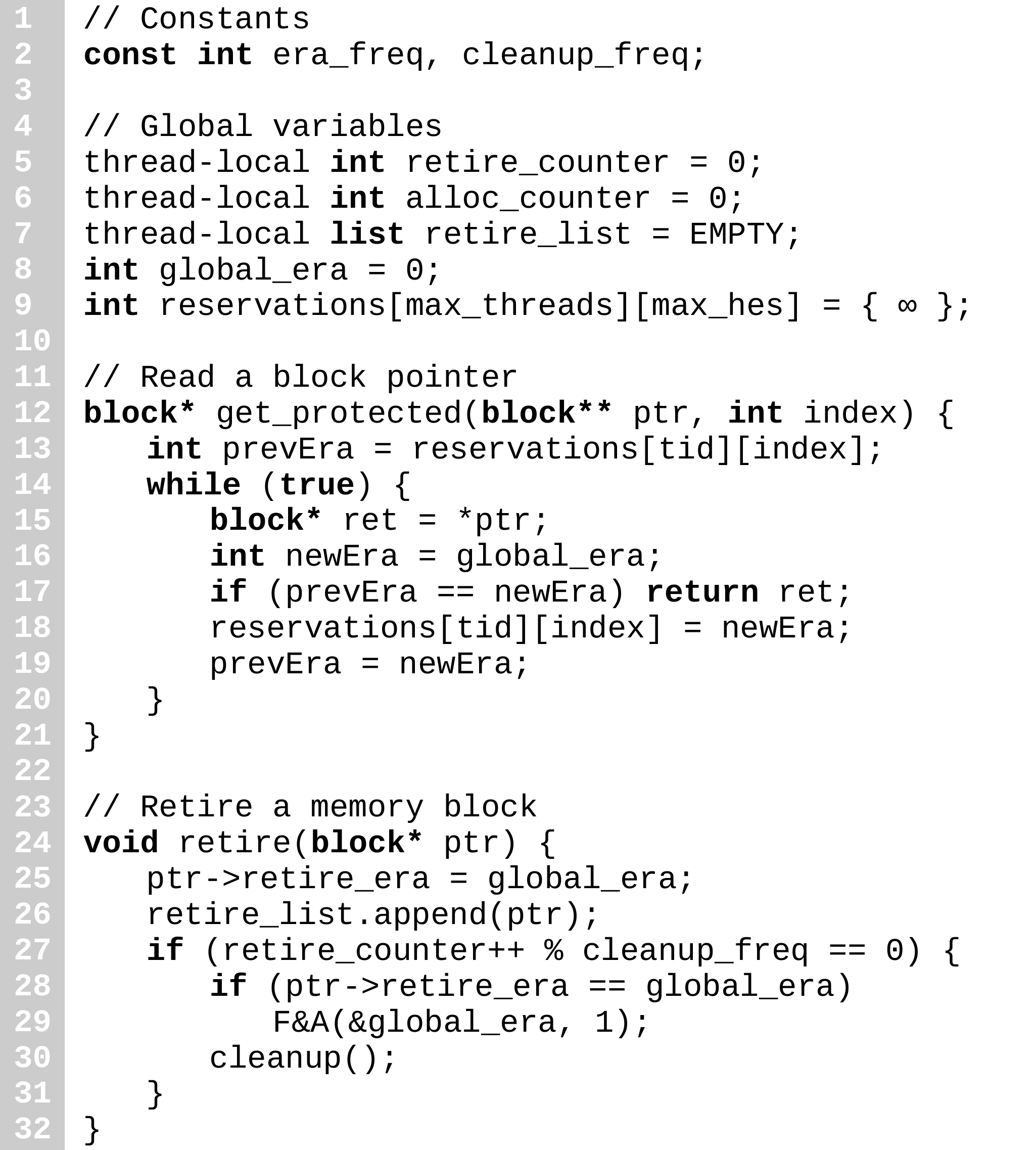}
\end{subfigure}
\begin{subfigure}{.49\textwidth}
\includegraphics[width=\columnwidth]{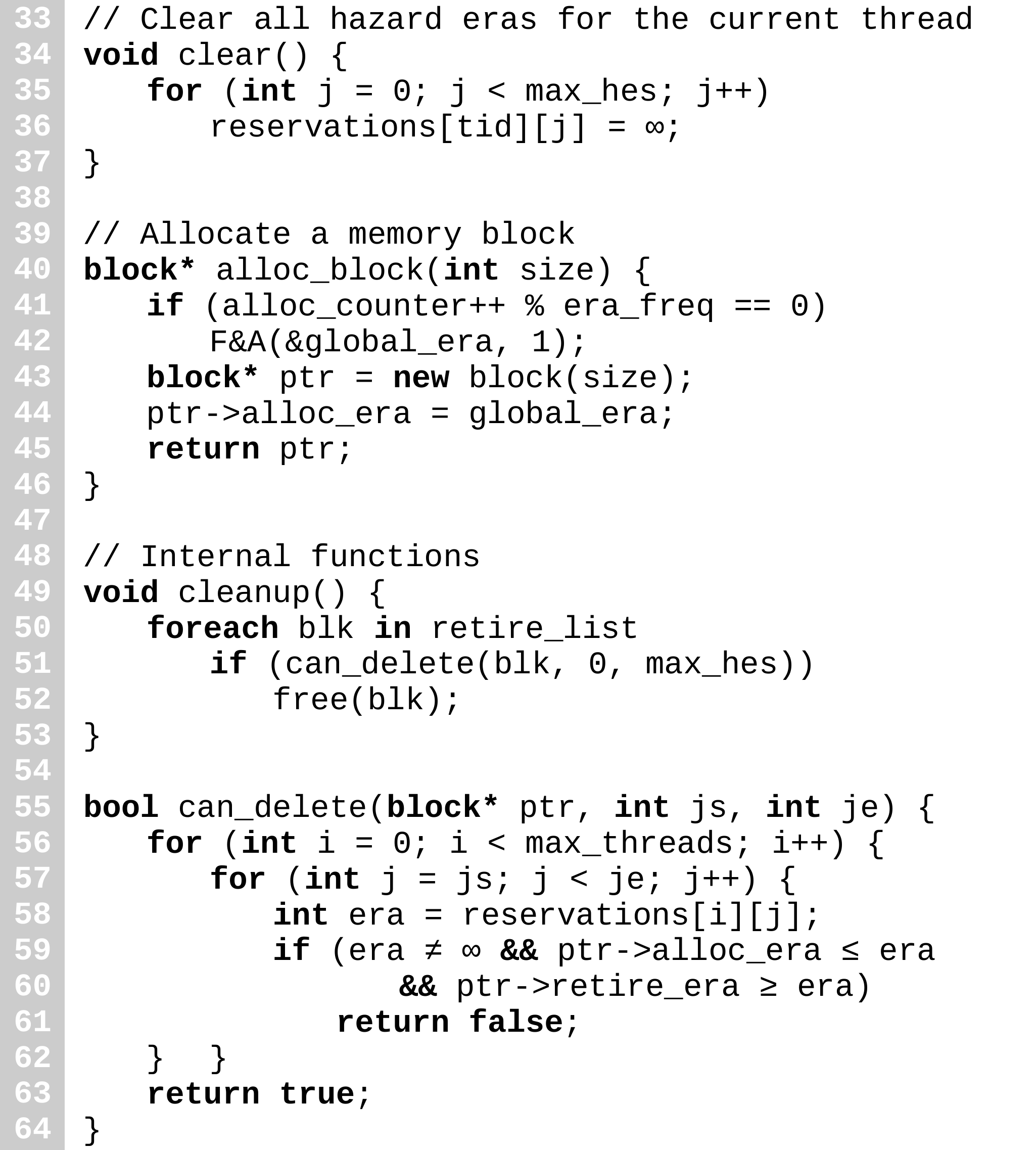}
\end{subfigure}
\caption{The Hazard Eras memory reclamation scheme.}
\label{fig:he}
\end{figure*}

\section{Background}

For greater clarity and completeness, we discuss
relevant memory reclamation schemes and the challenges in designing them with wait-free progress guarantees.

\subsection{Progress Guarantees} 

Non-blocking data structures can provide different
\textit{progress guarantees}. In \textit{obstruction-free} algorithms,
a thread performs an operation in a finite number of steps
if executed in isolation from other threads. In \textit{lock-free}
algorithms, at least one thread always makes progress in a finite number
of steps. Finally, \textit{wait-freedom} -- the strongest progress guarantee --
implies that \textit{all} threads make progress in a finite number of
steps. Wait-free data structures are particularly useful for latency-sensitive applications which usually have quality of service constraints.

Memory reclamation adds extra requirements to progress guarantees.
Unless memory usage is bounded, threads will be unable to allocate
memory at some point. This effectively blocks threads from
making further progress. Thus, memory reclamation schemes must 
guarantee that stalled or preempted threads will not prevent timely
memory reclamation.

The epoch-based reclamation (EBR) scheme~\cite{epoch1,epoch2} can have unbounded memory usage, preventing its use in wait-free algorithms. In contrast, reclamation schemes such as Hazard Pointers~\cite{HPPaper} and Hazard Eras~\cite{HEPaper} provide strict memory bounds as long as programs properly use these  schemes. However, both schemes lack \textit{wait-free} progress guarantees.

\subsection{Atomic Operations}
Typically, lock-free and wait-free data structures are implemented using compare-and-swap (CAS) operations. CAS is more general than 
other atomic operations such as fetch-and-add (F\&A), which can be
emulated by CAS. Nonetheless, x86\_64 and 
AArch64 (ARM; as of version 8.1) implement F\&A natively due to its better efficiency in hardware. Moreover,
hardware's F\&A execution time is bounded, which makes it appealing for wait-free algorithms.

In this paper, we also use wide CAS (WCAS), which  updates
two \textit{adjacent} memory words. WCAS is not to be confused
with double-CAS, which updates two \textit{arbitrary} words but is rarely
supported in commodity hardware. WCAS, however, is available in x86\_64 and 
AArch64 architectures. Furthermore, WCAS is required by commodity OSes such as Windows 8.1 or higher~\cite{CMPXCHG16BWINDOWS}.

\subsection{Hazard Eras}

The Hazard Eras~\cite{HEPaper} memory reclamation scheme merges EBR  
with Hazard Pointers.
Each allocated object retains two fields used by the reclamation scheme: \emph{``alloc\_era''} and \emph{``retire\_era''}.
When allocating a new object, its \emph{alloc\_era} is initialized
with the global era clock value, which is periodically incremented.
When the object is retired, its \emph{retire\_era} is also
initialized with the global era value. The
lifespan of the object is determined by these two eras.
When a thread accesses a hazardous reference, it publishes
the current era. Thus, if the lifespan of an object falls within \textit{any}
of the published eras, it will not be reclaimed.

Hazard Eras' API is mostly compatible with that of Hazard Pointers and consists of
the following operations:
\begin{itemize}
    \item \emph{get\_protected()}: safely retrieve a pointer to the protected object
    by creating a reservation; each object
    needs an \textit{index} that identifies
    the reservation.
    \item \emph{retire()}: mark an object for
    deletion; the retired object must be
    deleted from the data structure first, i.e.,
    only in-flight threads can still access it.
    \item \emph{clear()}: reset all prior
    reservations made by the current thread in \emph{get\_protected()}.
    \item \emph{alloc\_block()}: a special
    operation unique to Hazard Eras; it allocates
    a memory block and initializes its \emph{alloc\_era} to the global era clock value.
\end{itemize}

Figure~\ref{fig:he} presents the Hazard Eras
algorithm. In the algorithm, we assume
that the maximum number of threads is \emph{max\_threads}. The maximum number of
reservations per each thread is \emph{max\_hes}.
All retired nodes are appended to the thread-local
\emph{retire\_list}. The algorithm periodically scans this list
to check if old nodes can be safely de-allocated by calling \emph{cleanup()}.
To guarantee that the memory usage is bounded, the algorithm
periodically increments the global era
clock in \emph{alloc\_block()} and \emph{retire()}. Arguments on correctness and bounded memory usage can be found in~\cite{pedroHEFULL}.

In Figure~\ref{fig:usage}, we present an example of Treiber's stack~\cite{Treiber:tech86} implementation using Hazard Eras. The stack is a linked-list
of nodes which store pointers to inserted objects. Each node also
encapsulates a memory reclamation header block. When enqueuing an
object, we allocate a node using \emph{alloc\_block()}, store a pointer to the object, and update the stack pointer. When dequeuing, we dereference the top of the stack
using \emph{get\_protected()}. We only use \emph{index 0}
since we dereference just one pointer at a time.
The top of the stack is then updated
to refer to the next node. We retrieve an object pointer and retire the
dequeued node. Finally, \emph{clear()} resets all reservations
(i.e.,~\emph{index 0}) for the current thread. 

\subsection{Challenges in Wait-Free Memory Reclamation}

Only a few memory reclamation schemes are truly non-blocking, i.e., do
not use any OS mechanisms and also guarantee bounded memory usage.
Although certain OS mechanisms, such as signals, can transparently
handle reclamation, it is difficult to guarantee non-blocking behavior
in general, as locks are often used inside OS kernels when
dispatching signals.
We also focus on manual memory reclamation techniques, i.e., garbage
collectors are beyond the scope of this paper.
Though EBR is almost wait-free for most operations, it is blocking
due to its potentially unbounded memory usage. Since truly wait-free
algorithms must also be non-blocking, we only consider those algorithms.
We narrow our scope to Hazard Pointers~\cite{HPPaper},
Hazard Eras~\cite{HEPaper}, and Interval-Based Reclamation (IBR)~\cite{IBRPaper}.

In both Hazard Pointers and Hazard Eras, most
operations are already wait-free. However, access to hazardous
references through \emph{get\_protected()} is a noticeable exception. Both schemes need
unbounded loops:
Hazard Eras check that the published global era value has not
changed while reading the reference; Hazard Pointers publish 
the reference itself but still need to validate that the reference
has not changed since it was published. Despite this similarity, the  two schemes
differ drastically. To make Hazard Pointers wait-free, we must ensure
that the references do not change, which seems generally impossible to do
in a wait-free manner. In Hazard Eras, we only need to make sure that
the \textit{global era} value is unchanged. Although the original Hazard
Eras approach could not solve this problem, we demonstrate a viable solution.

We also considered IBR, especially because of its simpler API.
However, we preferred to extend the Hazard Eras scheme instead due to its strict memory usage guarantees,
even in the presence of starving or slow threads. (Wait-free memory reclamation
can still be used for ordinary, lock-free data structures.)
IBR requires additional changes to data structures to guard against this
condition~\cite{IBRPaper}. 
Certain tagged versions of IBR also
require more work to make them wait-free. Nonetheless, our approach is
applicable to the 2GEIBR version where only hazardous reference accesses need to be made wait-free.

\begin{figure}
\centering
\includegraphics[width=.5\columnwidth]{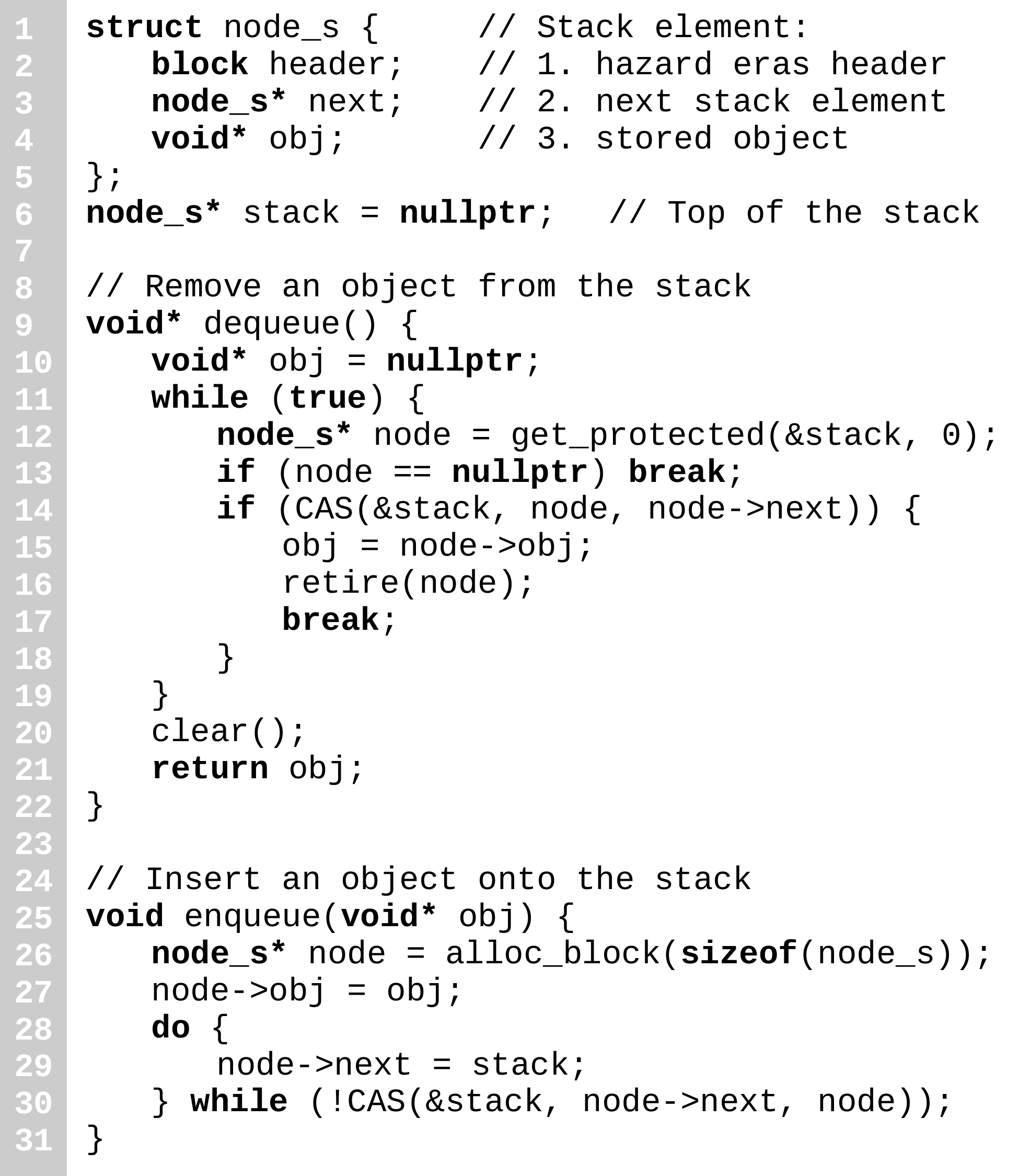}
\caption{Usage example: Treiber's lock-free stack.}
\label{fig:usage}
\end{figure}

\section{Wait-Free Memory Reclamation}

The proposed WFE algorithm employs the traditional \textit{fast-path-slow-path}
idea but avoids the Kogan-Petrank methodology~\cite{Kogan:2012:MCF:2145816.2145835} 
due to the 
complexity of memory reclamation and uniqueness of its challenges.
The high-level idea is to execute an almost unchanged Hazard Eras scheme
on the fast path. When \emph{get\_protected()} fails to complete
after a specified number of steps, the slow path is taken where threads
collaborate to guarantee wait-freedom.

\subsection{Assumptions}
We assume that the hardware supports WCAS and wait-free F\&A.
We also assume a 64-bit CPU since we need to atomically
manipulate eras (sometimes using WCAS) which, for safety, are typically at
least 64 bits wide in Hazard Eras. Our assumptions are true for common-place x86\_64
and more recent versions of AArch64 (ARM) processors.

Although not all other microarchitectures currently satisfy these requirements,
they can easily fall back to the original Hazard Eras algorithm to
retain API compatibility at the cost of losing wait-freedom.

To simplify our pseudo-code, we will further assume that
the memory model is
sequentially consistent~\cite{Lamport:1979:MMC:1311099.1311750}.
Our actual implementation relaxes this requirement to benefit
architectures with weaker memory models such as AArch64.

Finally, we assume that the number of threads is bounded, which is a reasonable
assumption made by most reclamation schemes.

\subsection{Data Fields and Formats}

We inherit most existing data fields from Hazard Eras. However, we
modify the \emph{reservations} array to record \textit{pairs} instead
of eras.
Each pair consists of an \emph{era} as well as a \emph{tag} for the
current reservation. The tag is only accessed on the slow
path and identifies the slow-path cycle. The tag is monotonically increased
after each such cycle and protects against spurious
(delayed) data changes happening afterwards.

We also reserve a special pointer value, \textbf{invptr}, which should
never be used by data structures. Since \textbf{nullptr}
is still occasionally used, we reserve the maximum integer value
instead; it should never be stored in valid pointers (e.g.,
mmap(2) returns the same value for MAP\_FAILED).

\begin{figure}
\centering
\includegraphics[width=.4\columnwidth]{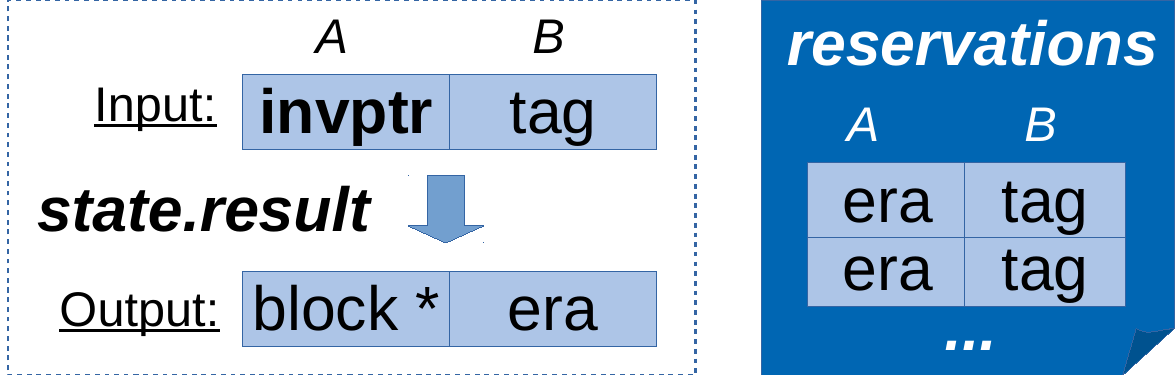}
\caption{WFE state and reservations.}
\label{fig:wfe}
\end{figure}

For the slow path, each reservation keeps \emph{state} (in a separate
array). We discuss all fields of \emph{state} below. One
of its fields, \emph{result}, is used for both input and output. On the
input, it records the \emph{tag} of the current slow-path cycle.
On the output, it contains a dereferenced pointer as well as
the era that needs to be set in \emph{reservations} for this
pointer. To distinguish these two cases, we place \textbf{invptr} in the place
of the dereferenced pointer and \emph{tag} in the place of \emph{era}.
The pointer value distinguishes whether the result is already produced.

We summarize new data fields in Figure~\ref{fig:wfe}.

\begin{figure*}
\vspace{+10pt}
\begin{subfigure}{.49\textwidth}
\includegraphics[width=\columnwidth]{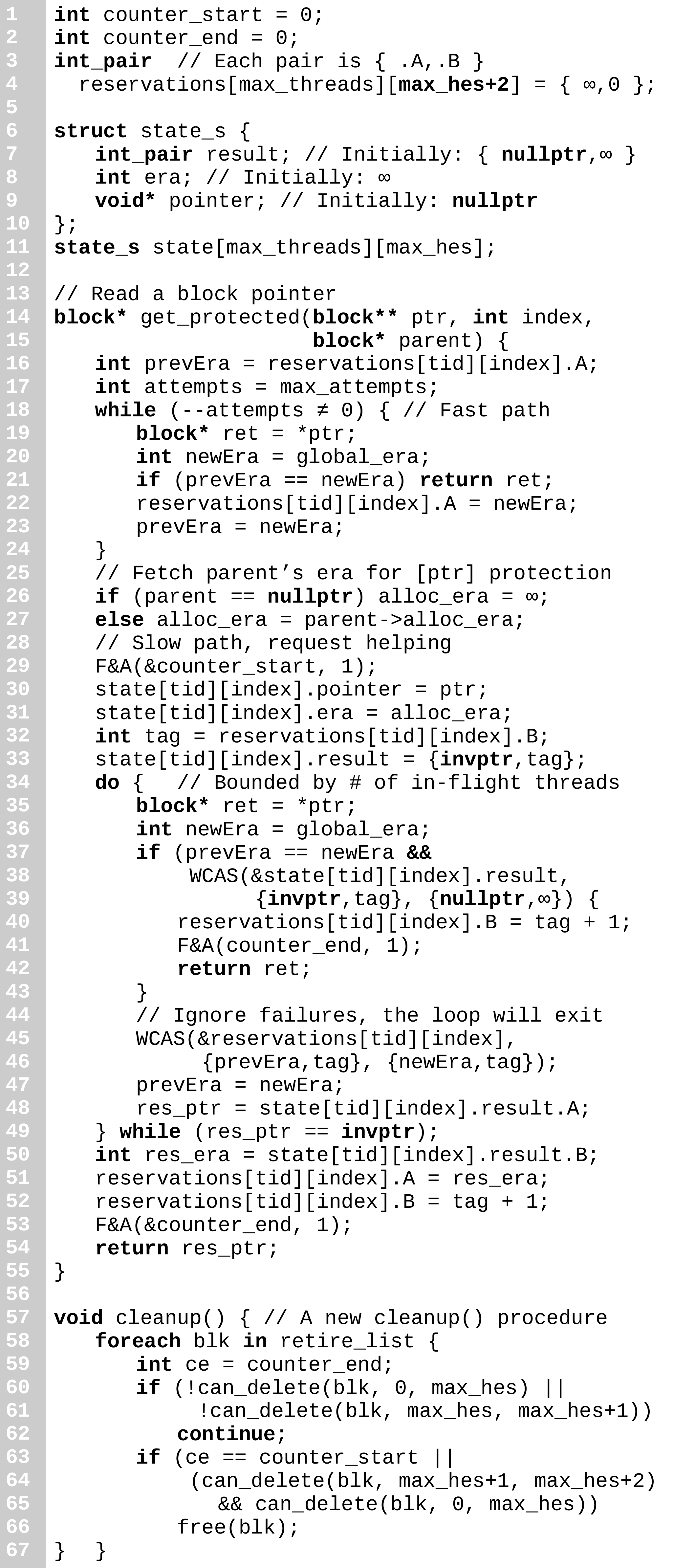}
\end{subfigure}
\begin{subfigure}{.49\textwidth}
\includegraphics[width=\columnwidth]{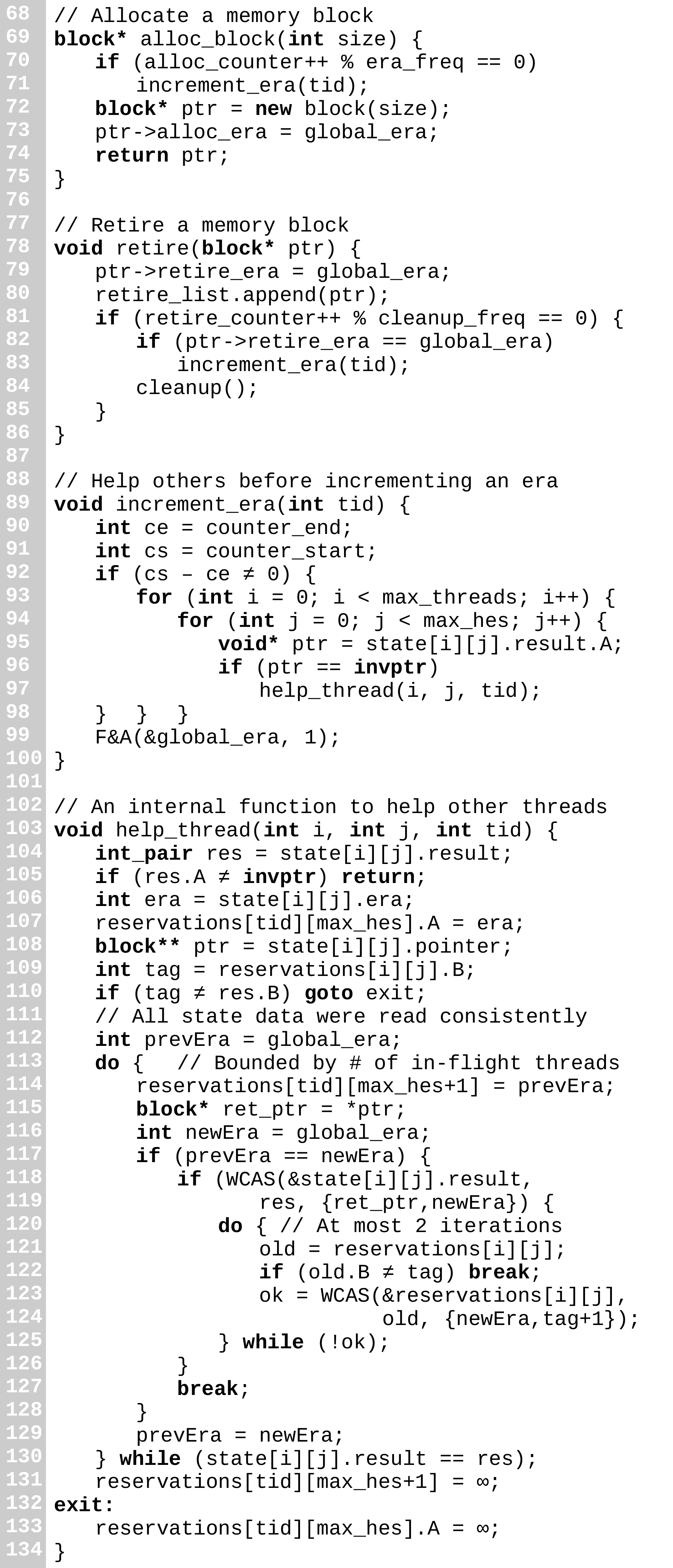}
\end{subfigure}
\vspace{+10pt}
\caption{The Wait-Free Eras (WFE) memory reclamation scheme.}
\label{fig:wfe_alg}
\end{figure*}

\subsection{Bird's-Eye View}
Assuming the presence of wait-free F\&A, as discussed in~\cite{pedroHEFULL},
Hazard Eras is already wait-free for all operations
except \emph{get\_protected()}, which is used to dereference pointers.
The \emph{get\_protected()} operation contains a potentially unbounded loop
and, consequently, is only lock-free. The only reason why
this loop may never converge is due to the changing value of the
global era clock. The global era value changes periodically when
allocating new objects or retiring old objects. Thus, to make the Hazard Eras algorithm
wait-free, threads that call \emph{alloc\_block()} and \emph{retire()} have
to collaborate with threads that call \emph{get\_protected()}.

In other words, \emph{alloc\_block()} and \emph{retire()} should not be
incrementing the global era clock
unless they can guarantee that all other running threads succeed in their \emph{get\_protected()} call. However,
global era increments cannot be simply postponed since they are crucial to
guarantee that the memory usage is bounded. Likewise, \emph{alloc\_block()}
or \emph{retire()} cannot block on \emph{get\_protected()} either. The idea
employed in WFE is that before incrementing the global era,
\emph{alloc\_block()} and \emph{retire()} will first check
if \textit{any} thread needs helping.

\subsection{Wait-Free Eras (WFE)}

Figure~\ref{fig:wfe_alg} presents WFE's pseudo-code for functions that diverge
from Hazard Eras. All other operations that use the \emph{reservations}
array need to be modified accordingly to access the \textit{A} component of the pair which
retains an era. The \emph{reservations} array is also extended by two additional
reservations per each thread, i.e., its size is now \emph{max\_hes+2}.
These two new reservations should never be used by the application; they are internal to \emph{help\_thread()}.

WFE slightly alters the API for \emph{get\_protected()}. Its extra argument, \emph{parent}, needs to provide the area (block) where
the hazardous reference is located. Typical data structures keep all references
inside their blocks (e.g., a linked-list node keeps the reference to the next
node) and the parent parameter simply
refers to the previously retrieved hazardous reference. Since the topmost
references do not have any parent, we also allow to pass \textbf{nullptr}
for them.

Lines~16-24 of \emph{get\_protected()} represent the fast path and are identical
to the original implementation. Lines~26-27 retrieve the \emph{alloc\_era}
from the parent block (if any), so that the block can be saved from
being reclaimed while retrieving the hazardous reference from it
by a helper method.

To facilitate fast detection if any thread needs helping, we maintain
global variables \emph{counter\_start} and \emph{counter\_end},  which
are atomically incremented using F\&A. Their purpose is twofold. First, the
difference between these two variables indicate the number of threads
that need helping. Second, if the value of \emph{counter\_start} changes,
it indicates that some new thread entered the slow path region.

On the slow path, a thread initializes \emph{state} with a hazardous
reference pointer and the \emph{alloc\_era} from the parent block where it
is located.
As the final step, the thread atomically flips the \emph{result} pair from
some valid (previously used) pointer and era to \textbf{invptr}
and the current slow-path cycle tag obtained from the \emph{reservations} array.
Right after that instant, concurrent threads in \emph{increment\_era()}
(Line~96), called from \emph{alloc\_block()} and \emph{retire()}, can detect
that this thread needs helping.

While waiting for help, the thread resumes the loop where it attempts
to retrieve the hazardous reference. If the thread succeeds, it simply
cancels the request by flipping its \emph{state} to some valid pointer and era
(\textbf{nullptr} and $\infty$ in the pseudo-code). At that point, only
the tag from the \emph{reservations} array needs to be incremented to prepare
for the next slow-path cycle (Line~40). The corresponding WCAS (Line~38)
call can also fail, indicating that some helping thread already
produced the output which must be used instead. Either way,
the thread attempts to update its current reservation
by using WCAS (Line~45) which can only fail if a helping thread already
produced an output and modified the corresponding entry in \emph{reservations}.
The loop exits when the pointer field is no longer \textbf{invptr}.
Finally, the thread
modifies the entry in \emph{reservations} from the output value in the
\emph{result} pair. (Note that the entry can already be set by the helping thread,
in which case the same value will be just written again.)

We will now discuss the \emph{help\_thread()} procedure. First, this
procedure must set a reservation for the parent block, so that it can
be safely accessed (Line~107). While doing so, the original \emph{get\_protected()} may already complete. Thus, we check the tag (Line~110) before proceeding.
Once the reservation is set, the new \emph{cleanup()} routine will make sure that the parent block is not reclaimed (see Section~\ref{sec:correctness}). 
When \emph{help\_thread()}'s loop converges and the output is produced (Line~118), WCAS in Line~123 will attempt to update the \emph{reservations} array on
behalf of the \emph{get\_protected()} thread.
It only succeeds if the tag still refers to the previous slow-path cycle.

\section{Correctness}

\label{sec:correctness}

We assume that programs are \textit{well-behaved}, i.e., they call
provided API functions appropriately.
We focus on the arguments related to wait-freedom
and reclamation safety. General non-blocking
and memory usage arguments remain the same as in~\cite{pedroHEFULL}
since WFE is based on Hazard Eras.
We will denote the number of threads by $n$.

\begin{lemma}
The loop in Lines~34-49 is bounded by at most $n$ iterations.
\label{theorem:prop1}
\end{lemma}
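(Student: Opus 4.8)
The plan is to reduce the iteration count to the number of times the global era clock can advance while the thread $T$ executing the loop is blocked inside it. First I would note that a pass through Lines~34--49 causes a further iteration only in one situation: the era that $T$ published disagrees with the era it re-reads, so $T$'s attempt to finalize (the WCAS at Line~38, and then Line~45) does not flip \emph{result} away from \textbf{invptr}, and the loop condition still holds. In every other case --- $T$ reading its hazardous reference under a still-current era, or a helper having already deposited the output --- the \emph{result} pointer is no longer \textbf{invptr} and the loop exits. A disagreement can arise only because some concurrent thread incremented the global era between $T$'s reads, so it suffices to bound the number of era increments that can occur after $T$ publishes its request.

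Next I would exploit the helping protocol. The era advances solely inside \emph{increment\_era()} (Line~96), reached only from \emph{alloc\_block()} and \emph{retire()}, which consults \emph{counter\_start} and \emph{counter\_end} and helps every pending request it observes before performing the increment. Let the \emph{publish instant} be the moment $T$ flips \emph{result} to \textbf{invptr} and registers its request through \emph{counter\_start}; since this is the last step before $T$ enters the loop, every iteration lies after it. I would argue that any invocation of \emph{increment\_era()} that advances the era without helping $T$ must have fixed its view of the counters strictly before the publish instant: otherwise it would observe that \emph{counter\_start} and \emph{counter\_end} differ, or observe that \emph{counter\_start} changed on the mandatory re-check, and would help $T$ rather than increment. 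Once such a helper deposits $T$'s output, $T$'s next iteration sees a non-\textbf{invptr} pointer and exits.

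It then remains to count the \emph{stale} invocations --- those whose counter observation precedes the publish instant but whose era increment lands afterward. Because each thread runs \emph{increment\_era()} sequentially, any thread $U \neq T$ can have at most one invocation in this in-flight state at the publish instant; $U$'s following invocation begins only after the in-flight one completes and therefore re-reads \emph{counter\_start} after $T$'s registration, at which point it helps $T$ instead of incrementing. With $n-1$ threads other than $T$, this yields at most $n-1$ era advances against $T$, so at most $n-1$ iterations can fail; the next one either succeeds under a now-stable era or finds $T$ already helped, for a total of at most $n$ iterations.

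The main obstacle I expect is making the ordering argument of the second step airtight. One has to show, using the sequential-consistency assumption and the monotonicity of the F\&A counters, both that every non-helping increment has its entire counter-observation window before the publish instant, and that the re-check of \emph{counter\_start} genuinely forces a helper to notice a newly arrived request; the interplay among the counter reads, the WCAS that flips \emph{result}, and the actual era increment is exactly where the pseudo-code must be matched against the informal claim. Bounding the stale invocations at one per thread is routine once this ordering is established.
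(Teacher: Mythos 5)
Your proposal is correct and follows essentially the same route as the paper's proof: it identifies changes to \emph{global\_era} as the only source of repeated iterations, observes that only \emph{increment\_era()} calls already past the detection point (Line~96) but before the increment (Line~99) at the publish instant can advance the era without first helping, and bounds these in-flight calls by one per thread, with all subsequent calls helping via \emph{help\_thread()} before incrementing. Your version is somewhat more detailed (the explicit publish instant, the $n-1$ versus $n$ accounting, and the counter-ordering caveats), but the decomposition and key ideas coincide with the paper's argument.
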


\begin{proof}
A thread initiates the slow path in Line~33. The loop can only become
unbounded due to changing \emph{global\_era}. At most $n$ in-flight threads
can already be executing \emph{increment\_era()}, from \emph{alloc\_block()}
or \emph{retire()}, prior to Line~99, which
increments \emph{global\_era}, but after Line~96, which detects threads
that need helping.
Each of these in-flight threads will execute Line~99, potentially causing
the loop in \emph{get\_protected()} to fail and repeat.
Subsequent \emph{increment\_era()} calls detect threads that need helping
and only increment \emph{global\_era} after \emph{help\_thread()} (Line~97) is complete.
\end{proof}

\begin{lemma}
The loop in Lines~113-130 is bounded by at most $n$ iterations.
\label{theorem:prop2}
\end{lemma}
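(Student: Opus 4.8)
The plan is to mirror the proof of Lemma~\ref{theorem:prop1}, since the loop in Lines~113--130 performs exactly the same hazardous-reference dereference as the slow-path loop of \emph{get\_protected()}, only executed by a helper on behalf of the waiting thread rather than by the waiting thread itself. As before, the only event that can force this loop to repeat is a change to \emph{global\_era}, and every such change is an execution of Line~99 inside some thread's \emph{increment\_era()}. I would therefore reduce the statement to bounding the number of era increments that can interfere with a single helper's dereference.

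First I would bound the interfering increments exactly as in Lemma~\ref{theorem:prop1}. At the instant the helper enters its loop, at most $n$ threads are \emph{in-flight}, i.e., have passed the detection point (Line~96) but not yet reached the increment at Line~99; each such thread can execute Line~99 at most once before any subsequent \emph{increment\_era()} call is routed back through Line~96, so this class contributes at most $n$ failed iterations. Every thread that reaches Line~96 only after the helping request becomes visible detects it and executes \emph{help\_thread()} at Line~97 \emph{before} its own Line~99, so it cannot increment \emph{global\_era} until that help call has returned.

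The hard part will be ruling out perpetual interference from the collaborative helping itself, since a priori several helpers working on the same request could keep re-incrementing the era and foiling one another indefinitely. I would close this gap with the invariant that Line~99 is gated behind a \emph{completing} call to \emph{help\_thread()} at Line~97: a late-arriving thread cannot increment until its help call returns, and that call returns only once the output has been produced---either by the thread itself or, detected through the \emph{result} pair, by another helper. Hence, once the at most $n$ in-flight increments are exhausted, no thread increments \emph{global\_era} until the output exists; during the resulting stable window some helper reads the reference successfully and records the result at Line~118, terminating the loop. The reservation installed for the parent block at Line~107, validated by the tag check at Line~110, guarantees that the reference read during this window is live and belongs to the current slow-path cycle, so the successful read is genuine rather than a restart caused by stale data. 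Combining these observations bounds the loop by at most $n$ iterations.
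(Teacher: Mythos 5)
Your opening paragraph and the in-flight-increment bound correctly track the proof of Lemma~\ref{theorem:prop1}, which is the right starting point: the paper's own proof begins by saying exactly that. But you miss the one genuinely new ingredient the paper adds, and the concern you substitute for it is not the right one. The paper's additional worry is not that several helpers foil one another by re-incrementing the era; it is that the \emph{helped} thread may keep requesting the slow path over and over again: it consumes the output (or cancels its request after succeeding on its own), finishes \emph{get\_protected()}, and immediately enters a fresh slow-path cycle with a new request. Each such new cycle legitimately unlocks a new era increment---some thread in \emph{increment\_era()} helps the new request at Line~97 and then executes Line~99---so your claim that ``no thread increments \emph{global\_era} until the output exists'' does not produce a stable window: outputs for successive cycles keep being produced and era increments keep coming without bound, even though every one of them is properly gated behind a completed help call. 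What terminates the helper's loop in this scenario is the tag check at Line~130: the tag identifies the slow-path cycle, so the helper detects that the one cycle it set out to help has ended (whether its output was produced or the request was cancelled) and exits, guaranteeing that \emph{help\_thread()} handles exactly one slow-path cycle. Your proposal never invokes Line~130---you use the tag only at Line~110, to validate the parent reservation---so it cannot rule out this unbounded behavior.

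A secondary problem is that your gating invariant, ``a help call returns only once the output has been produced,'' is both circular (termination of \emph{help\_thread()} is precisely what this lemma is establishing) and false as stated, since \emph{help\_thread()} also returns when its tag check fails because the request it was helping has been cancelled or superseded. The paper avoids both issues by reducing everything to Lemma~\ref{theorem:prop1} for the era-increment bound and adding only the single-cycle argument via the tag check at Line~130.
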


\begin{proof}
The proof is similar to that of Lemma~\ref{theorem:prop1}. The only
difference is that we need to also consider a case when the loop
potentially never terminates because the same thread keeps requesting the slow
path over and over again. However, Line~130 also checks the \emph{tag}
which guarantees that \emph{help\_thread} handles just one slow-path cycle.
\end{proof}

\begin{lemma}
The loop in Lines~120-125 is bounded by at most $2$ iterations.
\label{theorem:prop3}
\end{lemma}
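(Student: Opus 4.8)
The plan is to treat the loop in Lines~120--125 as a bounded WCAS retry loop that installs the computed output into the targeted \emph{reservations} entry on behalf of the helped \emph{get\_protected()} thread, and to show that this entry can change at most once during a single slow-path cycle. Because a retry is triggered only by such a change, at most one failed attempt is possible, which yields the bound of two iterations.

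First I would fix the slow-path cycle under consideration by fixing the \emph{tag} that the helper read before entering the loop. For this fixed tag I would enumerate the possible contents of the targeted \emph{reservations} entry, of which there are effectively two: a \emph{pre-output} state holding the era present when the cycle began, and a \emph{post-output} state holding the dereferenced era carried in the \emph{result} pair. Every writer that may touch this entry during the cycle --- the helped thread itself through its WCAS in Line~45 and every helper through the WCAS in Line~123 --- installs the \emph{same} output value, since the output is determined by the single \emph{result} pair for that cycle. I would then establish monotonicity: once the entry reaches the post-output state, the matching tag prevents reversion, so no writer can return it to the pre-output state without first advancing the tag, which ends the cycle.

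Next I would trace the two iterations. On the first iteration the helper reads the entry, verifies the tag, and issues the WCAS in Line~123; this WCAS fails only if a concurrent writer changed the entry between the read and the WCAS, and by the preceding step the only such change is the one-time transition into the post-output state (or an advance of the tag, which the helper detects and which exits the loop). On the second iteration the helper re-reads the entry and now observes the post-output value, so the output is already installed, the loop-entry condition is satisfied, and the helper exits without issuing a further WCAS. Hence the loop runs at most twice, and the degenerate case of an advanced tag --- handled exactly as in Lemma~\ref{theorem:prop2} --- likewise causes an immediate exit.

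The main obstacle I anticipate is rigorously ruling out ABA-style behaviour that would permit a second failed WCAS. This reduces to two facts that must be extracted from the algorithm: that all writers for a fixed cycle agree on the output value, so that competing WCAS operations are idempotent rather than conflicting, and that the tag is strictly monotone across cycles, so that the post-output state cannot be undone while the helper is still working on the same cycle. Pinning these down relies on the same tag-versioning invariant that underpins Lemmas~\ref{theorem:prop1} and~\ref{theorem:prop2}; once it is in hand, the two-iteration bound follows directly.
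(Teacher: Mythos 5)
Your overall structure (one possible WCAS failure, hence at most two iterations) matches the paper's conclusion, but the mechanism you use to bound the number of interfering writes is incorrect. You claim the targeted \emph{reservations} entry has ``effectively two'' states during a cycle and that every writer --- including the helped thread's WCAS in Line~45 --- installs the same output value drawn from the \emph{result} pair. Neither holds. The WCAS in Line~45 is executed \emph{inside} the helped thread's own slow-path retry loop (Lines~34--49) and publishes a freshly read global era for its next attempt at dereferencing the pointer; it does not write the output era. The paper's proof identifies exactly this write as the one event that can make the helper's WCAS in Line~123 fail. Before the output is produced, the helped thread may execute Line~45 once per iteration of its loop --- up to $n$ times by Lemma~\ref{theorem:prop1} --- each time with a generally different era, so your two-state, idempotent-writer model does not describe the entry's behavior, and under your argument alone the helper's loop could apparently fail many times.

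The missing idea is temporal, and it is the crux of the paper's proof: the loop in Lines~120--125 runs only after the WCAS in Line~118 succeeds, i.e., only after the output has been produced. From that moment on, the check in Line~49 will terminate the helped thread's slow-path loop, so the helped thread can execute at most \emph{one} further WCAS in Line~45 before exiting. That single potential write is the only event that can fail the helper's WCAS in Line~123 (the tag check in Line~122 confines everything to the current cycle, as you correctly note), so the loop repeats at most once, giving the bound of $2$. Your idempotency observation is accurate only for competing helpers, which do all install the same output; but it is the helped thread itself, not other helpers, that forces the second iteration, and its interfering write is not idempotent.
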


\begin{proof}
For this loop to continue, the tag must remain intact (Line~122), i.e.,
the corresponding \emph{get\_protected()} call is still pending.
The loop is only executed when WCAS in Line~118 succeeds. Consequently,
the output is produced, which prompts Line~49 to terminate the slow-path
loop in \emph{get\_protected()}. Prior to that loop termination,
WCAS can still change the \emph{reservations} array one more time (Line~45).
Thus, the loop in Lines~120-125 may repeat one more time.
\end{proof}

\begin{lemma}
A parent object is not reclaimed while running help\_thread()
as long as the object reclamation procedure first checks
normal reservations [0..max\_hes-1] and then the first special reservation.
\label{theorem:prop4}
\end{lemma}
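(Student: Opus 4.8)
The plan is to reduce the statement to the fundamental Hazard-Eras invariant: a retired block is never reclaimed as long as some reservation publishes an era lying within that block's lifespan, i.e.\ within the range from its \emph{alloc\_era} to its \emph{retire\_era}. Under this invariant it suffices to exhibit, for every instant at which \emph{help\_thread()} actually dereferences the parent block (that is, after the tag check at Line~110 succeeds), a reservation whose published era falls inside the parent's lifespan. I would argue there are exactly two candidate protectors. The first is the normal reservation among indices $0,\ldots,\emph{max\_hes}-1$ that the helped thread itself holds on the parent: it obtained the parent through an earlier \emph{get\_protected()} and is still suspended inside the current, pending call, so it has had no opportunity to clear or reuse that slot. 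The second is the first special reservation, which \emph{help\_thread()} installs at Line~107 using the parent's \emph{alloc\_era} copied from \emph{state}. A short, routine check confirms the published era is valid: since it equals the parent's \emph{alloc\_era}, it necessarily lies within the lifespan, so either protector genuinely shields the parent under the invariant.

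The heart of the proof is a timeline argument governing the hand-off from the normal reservation to the special one. First I would fix the relevant events: the helper setting the special reservation (Line~107), the helper's successful tag check (Line~110), the helped thread releasing its normal reservation, and the reclaimer's two scans. I would then establish the ordering constraints. Setting the special reservation precedes the tag check, so the special slot is populated before Line~110. A successful tag check at Line~110 certifies that the slow-path cycle is still pending, hence the helped thread is still inside \emph{get\_protected()} and cannot yet have released the parent's normal slot; consequently the release of the normal reservation must occur strictly after Line~110, and therefore strictly after the special reservation was set. This yields the key overlap: the special reservation is in place no later than the moment the normal reservation could possibly disappear, so at every instant during \emph{help\_thread()}'s access at least one of the two slots protects the parent.

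The crucial step -- and the place where the lemma's hypothesis on scan order is indispensable -- is to carry this overlap through a concurrent reclamation pass. Because the reclaimer inspects the normal slots first and the special slot second, its normal-scan precedes its special-scan. If the normal-scan observes the helped thread's reservation, the parent is declared live and we are done. Otherwise the normal slot was already cleared at the normal-scan, so the release happened before the normal-scan, which happened before the special-scan; combined with ``special set before release,'' this forces the special reservation to have been installed before the special-scan, so the special-scan observes it and again declares the parent live. I expect this to be the main obstacle: one must verify that no interleaving of the reclaimer's two reads can open a gap, and the argument breaks precisely if the scans were performed in the opposite order, since the reclaimer could then read the special slot before Line~107 and the normal slot after the release, missing both. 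I would close by dispatching the trivial remaining case -- if the tag check at Line~110 fails, \emph{help\_thread()} returns without ever dereferencing the parent, so there is nothing to protect.
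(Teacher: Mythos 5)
Your proof is correct and takes essentially the same route as the paper's: the parent is covered by the helped thread's normal reservation (guaranteed by the API convention and by the still-pending slow-path cycle), by the helper's special reservation installed at Line~107 and validated via the tag check at Line~110, and by the fact that the reclaimer scans the slots in the same order in which the reservations are set. Your detailed interleaving case analysis (normal-scan sees the slot, or else the release preceded the normal-scan and hence the special reservation was in place before the special-scan) is an explicit expansion of the paper's one-sentence conclusion that ``the order of making reservations coincides with the order of checks,'' so the two arguments coincide.
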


\begin{proof}
For an object to be removed, no reservation should overlap with it.
By the API convention, the parent object already has a corresponding
reservation for it (except when it is \textbf{nullptr}). 
However, this reservation is not guaranteed to last after
\emph{get\_protected()} is complete. If the tag in Line~110 matches the
corresponding field in \emph{result}, the reservation was set
while \emph{get\_protected()} was still active. For other cases,
we simply exit from \emph{help\_thread()}.
Since the order of making reservations coincides with
the order of checks, the parent object will be covered by at least
one reservation.
\end{proof}

\begin{lemma}
An object referred to by a hazardous entry is not reclaimed while
handing over a reservation from help\_thread() to get\_protected()
as long as the object reclamation procedure first checks the second
special reservation and then normal reservations [0..max\_hes-1].
\label{theorem:prop5}
\end{lemma}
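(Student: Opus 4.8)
The plan is to mirror the hand-over-hand argument of Lemma~\ref{theorem:prop4}, but with the roles of the two reservations and the scan order reversed. Here the protection of the dereferenced (result) object must migrate \emph{from} the second special reservation, which \emph{help\_thread()} owns while it produces the output, \emph{into} a normal slot [index] that \emph{get\_protected()} hands back to the application. The goal is to show that at every instant of this migration the object is covered by at least one of the two slots, given that the reclaimer now scans the second special reservation \emph{before} the normal slots, i.e., in the opposite order to Lemma~\ref{theorem:prop4}.

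First I would pin down the two writes that constitute the handover and their relative order. I would argue that once \emph{help\_thread()}'s loop converges (Line~118), it installs the object's era into the second special reservation \emph{before} that era is ever written into the normal slot [index], whether the slot is updated by the helper via the WCAS in Line~123 or by the originating thread via the WCAS in Line~45 after it reads the published output. Thus the ``making'' order is: second special reservation first, normal slot second. Symmetrically, the second special reservation is only released (reused by a subsequent helping cycle) after the normal slot has been set within the same \emph{help\_thread()} invocation, so throughout the window both slots carry the era.

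Next I would run the reclaimer's scan against this order. Since the reclaimer reads the second special reservation first (say at time $t_1$) and the normal slots afterward (at $t_2 > t_1$), an incorrect reclamation would require seeing neither. If the second special reservation still holds the era at $t_1$, the object is safe. Otherwise it was already released by $t_1$; by the ordering above the release happened only after the normal slot was set, so the normal slot held the era before $t_1 < t_2$, and—since a well-behaved application has not yet had the chance to \emph{clear()} a reservation it is about to receive—the slot still holds it at $t_2$. Either way one slot covers the object, so it is not reclaimed; this is exactly where the coincidence of the making order with the check order is used.

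The main obstacle will be the atomicity and tag bookkeeping of the handover, which is more delicate than in Lemma~\ref{theorem:prop4} because the normal slot may be written either by the helper (Line~123) or by the originating thread (Line~45), and both writes are guarded by a tag-matched WCAS. I would need to show that exactly one of these writes takes effect per slow-path cycle and that it installs precisely the era the second special reservation was holding, so that the ``one slot always covers'' invariant is not broken by a stale or spurious write belonging to a mismatched cycle. The tag check—the same device already invoked in Lemmas~\ref{theorem:prop2} and~\ref{theorem:prop3}—is what I expect to carry this uniqueness argument.
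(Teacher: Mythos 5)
Your proposal is correct and follows essentially the same route as the paper's proof: the paper likewise argues that the hazardous-entry object's protection migrates from the second special reservation to the normal slot (the reverse of the parent-object case in Lemma~\ref{theorem:prop4}), so the reclaimer's check order must coincide with this reversed making order. Your write-up merely spells out the hand-over-hand timing argument and the tag-uniqueness bookkeeping that the paper's terse proof leaves implicit by deferring to ``similar arguments as in Lemma~\ref{theorem:prop4}.''
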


\begin{proof}
Similar to a parent object, a hazardous entry object obtained
in \emph{help\_thread()} is protected by a special and normal reservations.
Compared to Lemma~\ref{theorem:prop4}, the order of
reservations is different. Using similar arguments as in
Lemma~\ref{theorem:prop4}, we conclude that the order of checks must
also happen in the opposite direction.
\end{proof}

\begin{theorem}
get\_protected() is wait-free bounded.
\end{theorem}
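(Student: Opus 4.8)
The plan is to combine the per-loop bounds established in Lemmas~\ref{theorem:prop1}--\ref{theorem:prop3} with the reclamation-safety guarantees of Lemmas~\ref{theorem:prop4}--\ref{theorem:prop5} and show that every invocation of \emph{get\_protected()} halts after a number of steps bounded by a function of $n$. Since wait-free boundedness is precisely the claim that the step count is finite and schedule-independent, it suffices to exhibit such a bound for each region of the code in turn.

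First I would dispatch the fast path. Lines~16--24 are executed at most a fixed, predetermined number of times before the thread abandons the fast path and switches to the slow path; since this threshold is a compile-time constant independent of the schedule, the fast path contributes only $O(1)$ steps. Next I would bound the slow path. After the constant-bounded fast path, the thread enters the slow path at Line~33 and runs the loop in Lines~34--49. By Lemma~\ref{theorem:prop1} this loop repeats at most $n$ times, and each iteration performs only straight-line work plus a bounded number of WCAS and F\&A operations, which are wait-free by our hardware assumptions and contain no nested unbounded loops of their own. Hence the slow path contributes $O(n)$ steps, and summing with the fast path yields an overall bound of $O(n)$.

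The crux is to justify that the termination promised by Lemma~\ref{theorem:prop1} is actually realized, i.e., that \emph{global\_era} ceases to change in a way that would force the loop to repeat indefinitely. This rests on the cooperation protocol: once the thread flips its \emph{result} field to \textbf{invptr}, any concurrent \emph{increment\_era()} call issued from \emph{alloc\_block()} or \emph{retire()} detects the pending request (Line~96) and runs \emph{help\_thread()} before incrementing the clock (Line~99). Lemmas~\ref{theorem:prop2} and~\ref{theorem:prop3} bound the helper's own loops, so no helper can stall the clock for unbounded time, while Lemmas~\ref{theorem:prop4} and~\ref{theorem:prop5} guarantee that the helper dereferences only live (unreclaimed) objects and therefore produces a valid output rather than faulting.

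The main obstacle I expect is verifying that these pieces interlock without a gap --- in particular, that every thread which could still advance \emph{global\_era} either helps the requester before doing so or has already passed its detection point, so that only the finite set of in-flight threads accounted for in Lemma~\ref{theorem:prop1} can trigger a repeat. Once that interlocking is established, the waiting \emph{get\_protected()} thread is certain to read a produced output within the $n$-iteration budget, and the combined $O(1)+O(n)$ step count establishes that \emph{get\_protected()} is wait-free bounded.
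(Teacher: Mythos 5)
Your proposal is correct and takes essentially the same approach as the paper: a constant bound on the fast path plus Lemma~\ref{theorem:prop1}'s $n$-iteration bound on the slow-path loop, which terminates either because the output is produced (Line~49) or because the global era stays intact. The only divergence is that you additionally invoke Lemmas~\ref{theorem:prop2}--\ref{theorem:prop5}; the paper does not use them for this theorem (they support the \emph{alloc\_block()}/\emph{retire()} and \emph{cleanup()} results), since Lemma~\ref{theorem:prop1} alone already bounds the loop irrespective of the helpers' internal behavior.
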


\begin{proof}
The number of iterations on the fast path is bounded. For the slow
path, Lemma~\ref{theorem:prop1} guarantees that the output must be produced
at most after $n$ iterations, or the corresponding loop must converge due to the global era value staying intact. When the output is produced, the slow-path loop terminates~(Line~49).
\end{proof}

\begin{theorem}
alloc\_block() is wait-free bounded.
\label{theorem:prop6}
\end{theorem}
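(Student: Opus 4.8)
The plan is to show that `alloc_block()` completes in a bounded number of steps by decomposing it into its constituent operations and bounding each one. From the Bird's-Eye View and the structure established earlier, `alloc_block()` performs the actual memory allocation, initializes the block's `alloc_era`, and periodically calls `increment_era()` (Line 96 onward) which in turn may invoke `help_thread()`. The allocation and initialization steps are straightforward wait-free operations, so the crux reduces to bounding `increment_era()` together with any helping it triggers.

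First I would argue that `increment_era()` itself contains no unbounded loop: it performs an F\&A-based check on `counter_start` and `counter_end` to detect whether any thread needs helping, and this detection is a constant-time read/compare. If a thread is detected as needing help, `alloc_block()` calls `help_thread()` (Line 97) before incrementing `global_era` (Line 99). So the only potentially unbounded contribution to the running time comes from the helping machinery, not from a local loop in `alloc_block()`.

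Next I would bound the helping work. By Lemma~\ref{theorem:prop2}, the loop in Lines~113--130 inside `help_thread()` runs for at most $n$ iterations, and by Lemma~\ref{theorem:prop3}, the nested loop in Lines~120--125 runs at most twice; thus a single invocation of `help_thread()` is wait-free bounded. The remaining concern is how many distinct threads `alloc_block()` might be obligated to help: since the number of threads is bounded by $n$ (per the assumptions) and each pending slow-path request corresponds to a thread that has flipped its `result` to \textbf{invptr}, the number of outstanding help requests that a single `increment_era()` must service before incrementing `global_era` is at most $n$. Composing these bounds, the total work of `alloc_block()` is $O(n)$ calls to a wait-free-bounded `help_thread()`, plus constant-time allocation and era initialization, hence wait-free bounded.

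The main obstacle I anticipate is ruling out an indirect source of unboundedness: one must confirm that while `alloc_block()` is helping other threads, it cannot itself be indefinitely delayed by a cascade of \emph{newly arriving} slow-path requests. This is precisely the subtlety already handled in Lemma~\ref{theorem:prop1}'s reasoning about in-flight `increment_era()` calls — the detection at Line~96 happens before the increment at Line~99, so once a thread commits to incrementing `global_era` it only helps the finite set of threads already pending, and subsequent requests are the responsibility of subsequent `increment_era()` calls. I would therefore lean on this separation to argue that the helping obligation of any single `alloc_block()` invocation is a fixed, finite set, closing the bound.
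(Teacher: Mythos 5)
Your proof is correct and follows essentially the same route as the paper: both decompose \emph{alloc\_block()} into its calls to \emph{increment\_era()}, observe that the loops inside \emph{increment\_era()} are bounded, and invoke Lemmas~\ref{theorem:prop2} and~\ref{theorem:prop3} to bound each \emph{help\_thread()} invocation. Your added discussion of the at-most-$n$ pending help requests and the Line~96/Line~99 separation ruling out cascading new requests is a more explicit spelling-out of what the paper leaves implicit, but it is the same argument.
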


\begin{proof}
\emph{alloc\_block()} periodically calls \emph{increment\_era()}.
Loops inside \emph{increment\_era()} are already bounded. On each
iteration, \emph{help\_thread()} is called. The \emph{help\_thread()} function
is bounded due to Lemmas~\ref{theorem:prop2} and \ref{theorem:prop3}.
\end{proof}

\begin{theorem}
retire() is wait-free bounded.
\end{theorem}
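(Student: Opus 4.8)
The plan is to prove that \emph{retire()} is wait-free bounded by reducing it to the same structure already established for \emph{alloc\_block()} in Theorem~\ref{theorem:prop6}. The key observation is that both \emph{alloc\_block()} and \emph{retire()} share the same mechanism for advancing the global era: they periodically invoke \emph{increment\_era()}, which is the only component that could in principle introduce unboundedness. Everything else in \emph{retire()} — appending the retired node to the thread-local \emph{retire\_list} and scanning that list in \emph{cleanup()} — operates over thread-local or bounded structures and completes in a bounded number of steps, so the whole argument funnels through \emph{increment\_era()}.

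First I would note that \emph{retire()} performs a bounded amount of local work (list insertion and a periodic \emph{cleanup()} pass, whose scan is bounded by the retire-list size), and that the only call with a potentially unbounded loop is \emph{increment\_era()}. Next I would invoke the fact, used identically in Theorem~\ref{theorem:prop6}, that the loops inside \emph{increment\_era()} are already bounded, and that on each iteration the helping call \emph{help\_thread()} (Line~97) is made. Then I would cite Lemmas~\ref{theorem:prop2} and~\ref{theorem:prop3} to conclude that \emph{help\_thread()} itself terminates within a bounded number of iterations. Composing these bounds yields that \emph{increment\_era()}, and hence \emph{retire()}, completes in a bounded number of steps.

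In short, the proof is essentially a one-line appeal to the already-established bound for \emph{increment\_era()}: since \emph{retire()} reaches the global era clock through the very same \emph{increment\_era()} path as \emph{alloc\_block()}, and since that path is bounded by Lemmas~\ref{theorem:prop2} and~\ref{theorem:prop3}, \emph{retire()} inherits the wait-free bounded guarantee. I expect there is no genuine obstacle here, only a bookkeeping point worth stating explicitly: one should confirm that the additional work specific to \emph{retire()} (the \emph{cleanup()} scan) does not itself depend on the behavior of other threads in an unbounded way. Because the retire-list is thread-local and its growth is capped by the scheme's bounded-memory guarantee inherited from Hazard Eras, this cleanup work is bounded independently of contention, so the reduction to Theorem~\ref{theorem:prop6} goes through cleanly.
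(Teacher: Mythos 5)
Your proposal is correct and follows essentially the same route as the paper: the paper's proof is literally a one-line reduction to Theorem~\ref{theorem:prop6} (\emph{alloc\_block()}), whose proof rests on the boundedness of \emph{increment\_era()} and of \emph{help\_thread()} via Lemmas~\ref{theorem:prop2} and~\ref{theorem:prop3}, exactly as you argue. Your extra remark that the thread-local \emph{retire\_list} and \emph{cleanup()} work is bounded is a sensible explicit bookkeeping step that the paper leaves implicit.
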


\begin{proof}
The proof is similar to that of Theorem~\ref{theorem:prop6}.
\end{proof}

\begin{theorem}
WFE's cleanup() is safe for memory reclamation.
\end{theorem}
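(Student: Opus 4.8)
The plan is to reduce safety to the distinct sources of protection that can keep an object live and then show that a single scan order for \emph{cleanup()} honors all of them at once. First I would reuse the underlying Hazard Eras argument~\cite{pedroHEFULL}: each reservation pair stores an era in its \emph{A} component, and an object whose lifespan, delimited by its \emph{alloc\_era} and \emph{retire\_era}, is disjoint from every published normal era may be freed. Because the fast path (Lines~16--24) is unchanged from the original scheme and the \emph{tag} component plays no part in era overlap, the inherited argument already establishes safety for any object touched only on the fast path. The remaining work is to account for the two new ways the slow path can keep an object reachable.

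Second, I would isolate those two hazards, both already controlled by the lemmas. A \emph{parent} block dereferenced inside \emph{help\_thread()} is covered by Lemma~\ref{theorem:prop4}, provided \emph{cleanup()} inspects the normal reservations before the first special reservation; a hazardous object being handed from \emph{help\_thread()} to \emph{get\_protected()} is covered by Lemma~\ref{theorem:prop5}, provided \emph{cleanup()} inspects the second special reservation before the normal reservations. The core step is therefore to exhibit one scan order satisfying both premises simultaneously, namely the second special reservation, then the normal reservations (indices $0$ to \emph{max\_hes}$-1$), then the first special reservation. This order places the second special check before the normals (as Lemma~\ref{theorem:prop5} requires) and the normals before the first special check (as Lemma~\ref{theorem:prop4} requires), so the two constraints compose rather than conflict.

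With that order fixed, the conclusion would follow by a case analysis on how an object could still be referenced at the instant \emph{cleanup()} decides to free it: if only via the fast path, the Hazard Eras argument applies; if as a parent under \emph{help\_thread()}, Lemma~\ref{theorem:prop4} forbids reclamation; and if mid-handover, Lemma~\ref{theorem:prop5} forbids it. Since \emph{cleanup()} reclaims an object only after finding no overlapping reservation under this order, no reachable object is ever freed.

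The main obstacle I anticipate is precisely the reconciliation in the second step: the two lemmas demand opposite relative orderings against the normal block, and one must check they determine a consistent global order rather than an unsatisfiable one. They do because they constrain \emph{different} special reservations (the first guarding parents, the second guarding handovers), so the normal block can be wedged between them. A secondary subtlety is that each scan reads reservations one at a time, so I would lean on the handover reasoning embedded in Lemmas~\ref{theorem:prop4} and~\ref{theorem:prop5}---that the later reservation is installed before the earlier one is cleared---to rule out any window in which a concurrent transition hides the object from a non-atomic scan.
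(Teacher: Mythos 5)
Your proposal takes essentially the same route as the paper: its proof likewise reduces safety of \emph{cleanup()} to satisfying the scan-order premises of Lemmas~\ref{theorem:prop4} and~\ref{theorem:prop5} simultaneously, together with the inherited Hazard Eras discipline for all other blocks. Your explicit reconstruction of the compatible order---second special reservation, then normal reservations, then first special reservation---correctly spells out what the paper leaves implicit in its pseudo-code.
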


\begin{proof}
The reservation scanning discipline in \emph{cleanup()} satisfies both Lemmas~\ref{theorem:prop4} and \ref{theorem:prop5}. It also satisfies Hazard Eras' original discipline for all other blocks.
\end{proof}

\section{Performance Results}

\begin{figure*}[ht]
\begin{subfigure}{.5\textwidth}
\centering
\includegraphics[width=.99\textwidth]{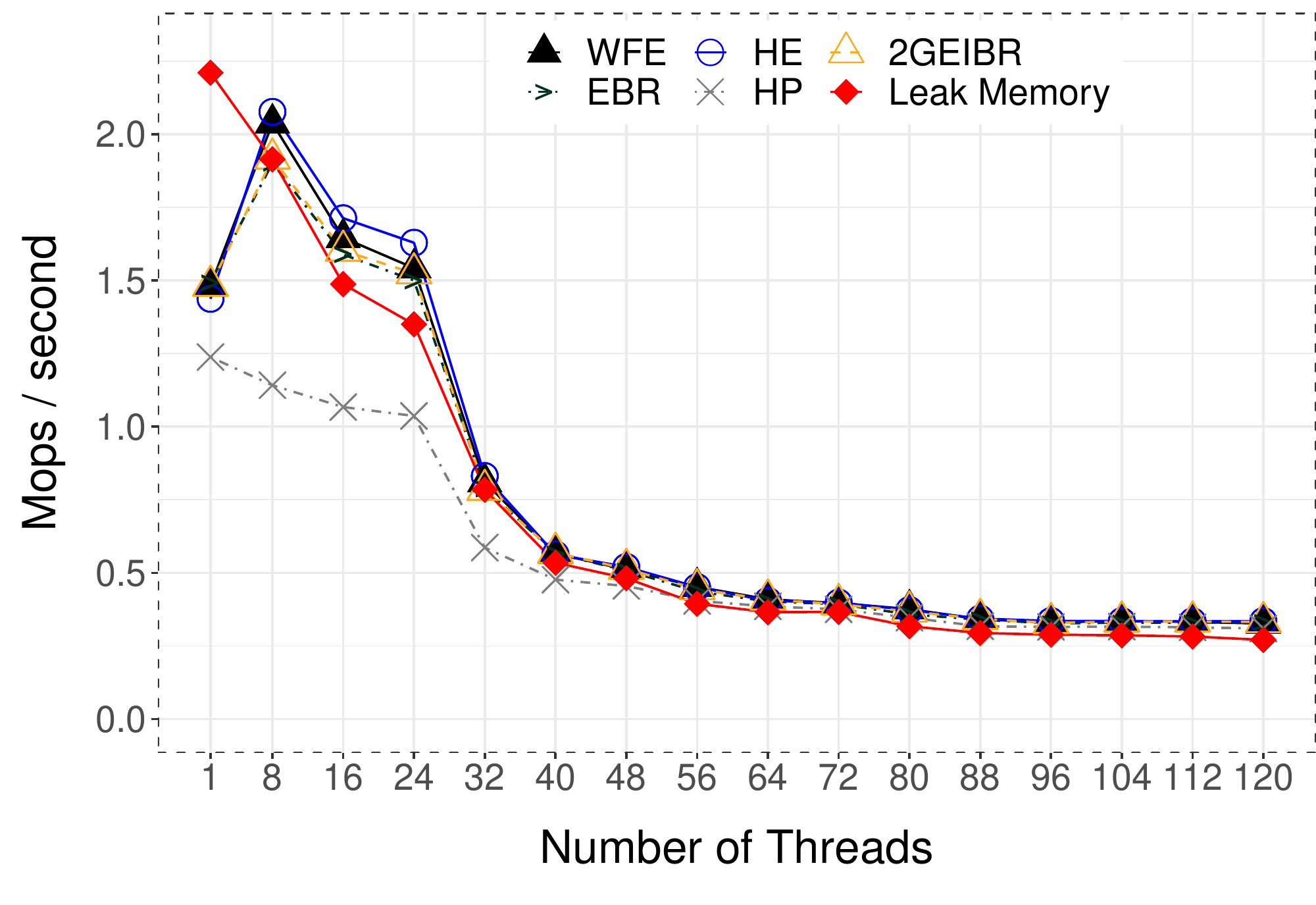}
\caption{KP (throughput)}
\label{fig:kp_thru}
\end{subfigure}%
\begin{subfigure}{.5\textwidth}
\centering
\includegraphics[width=.99\textwidth]{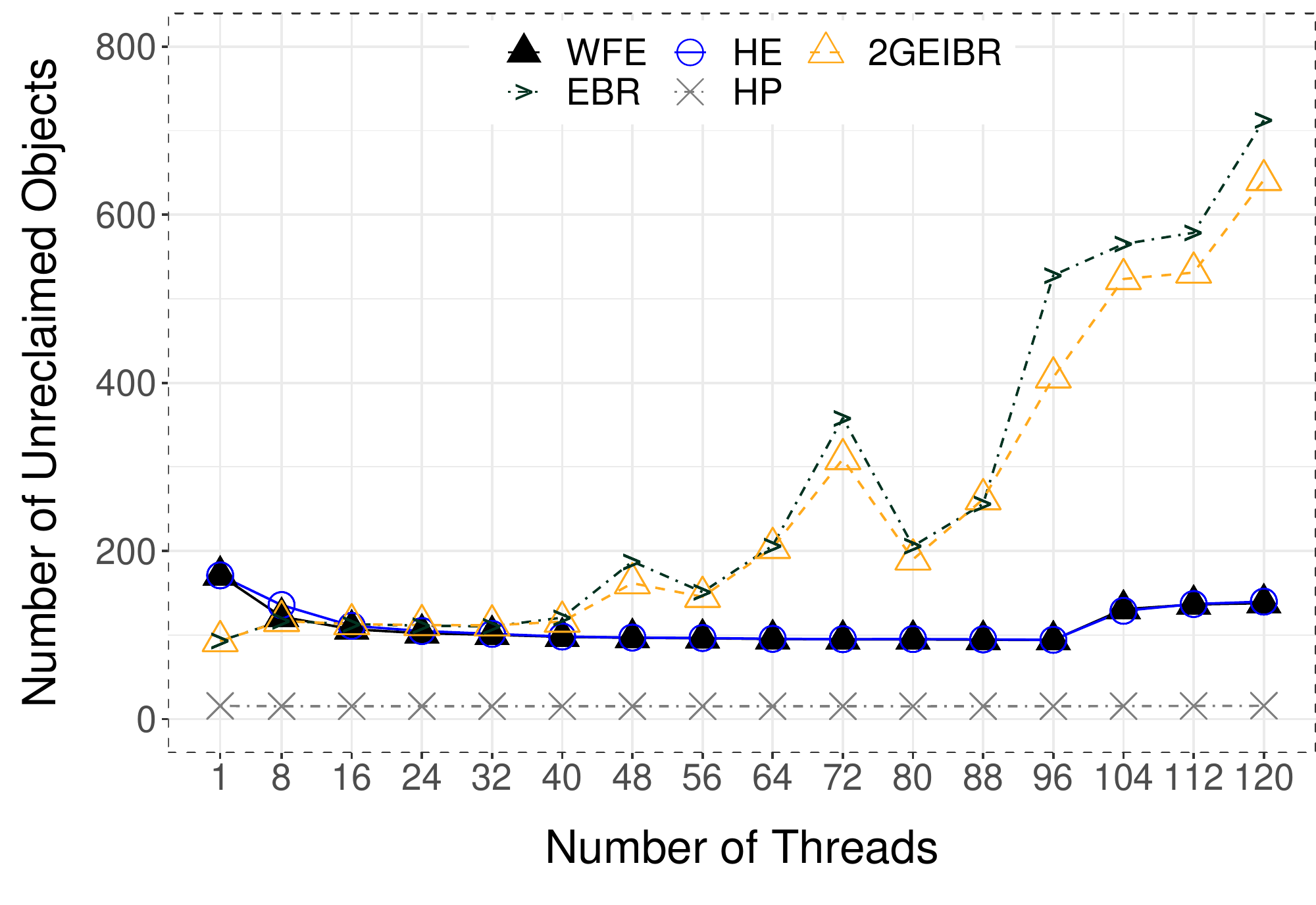}
\caption{KP (unreclaimed objects)}
\label{fig:kp_unrec}
\end{subfigure}%
\\
\begin{subfigure}{.5\textwidth}
\centering
\includegraphics[width=.99\textwidth]{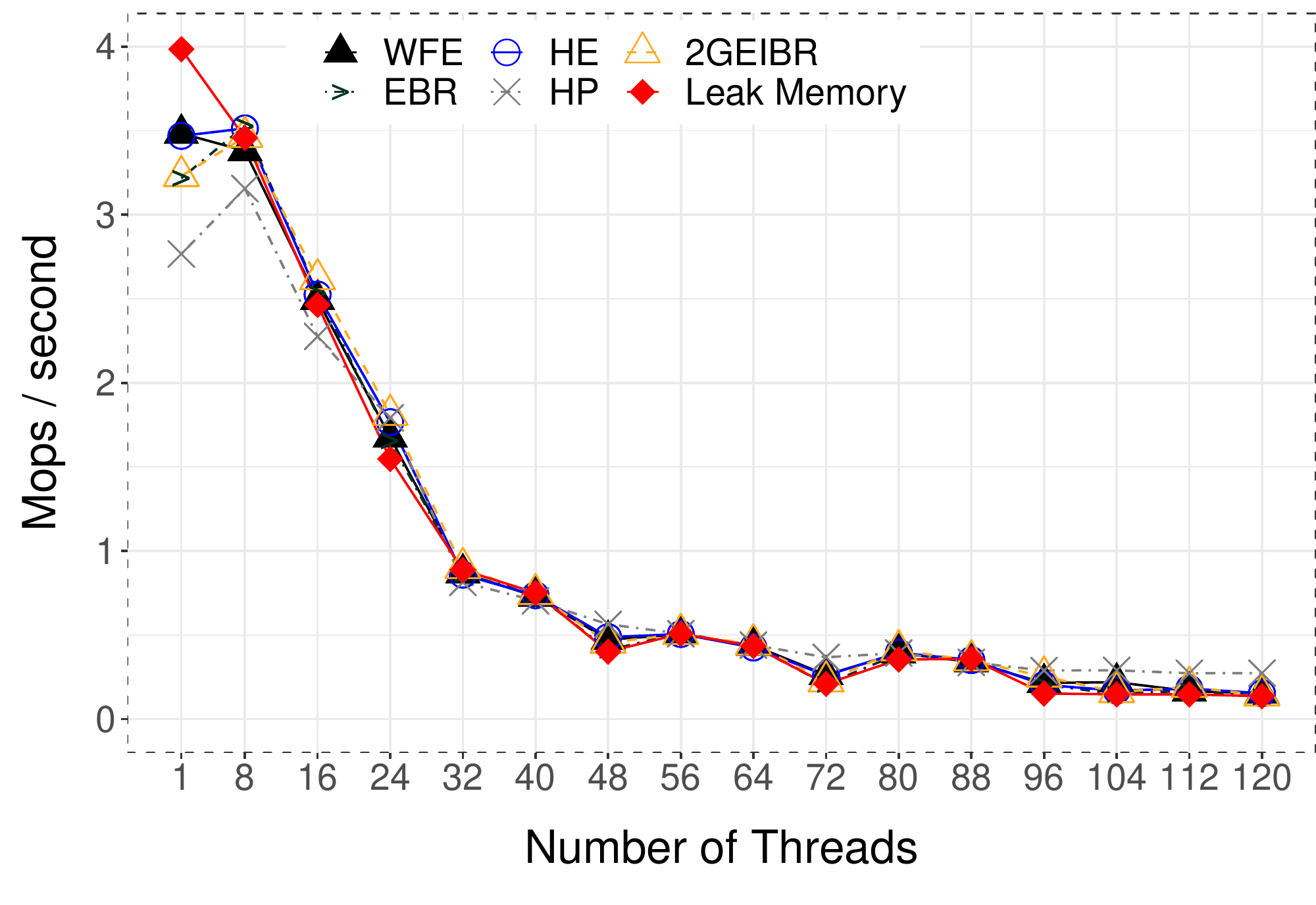}
\caption{CRTurn (throughput)}
\label{fig:crturn_thru}
\end{subfigure}%
\begin{subfigure}{.5\textwidth}
\centering
\includegraphics[width=.99\textwidth]{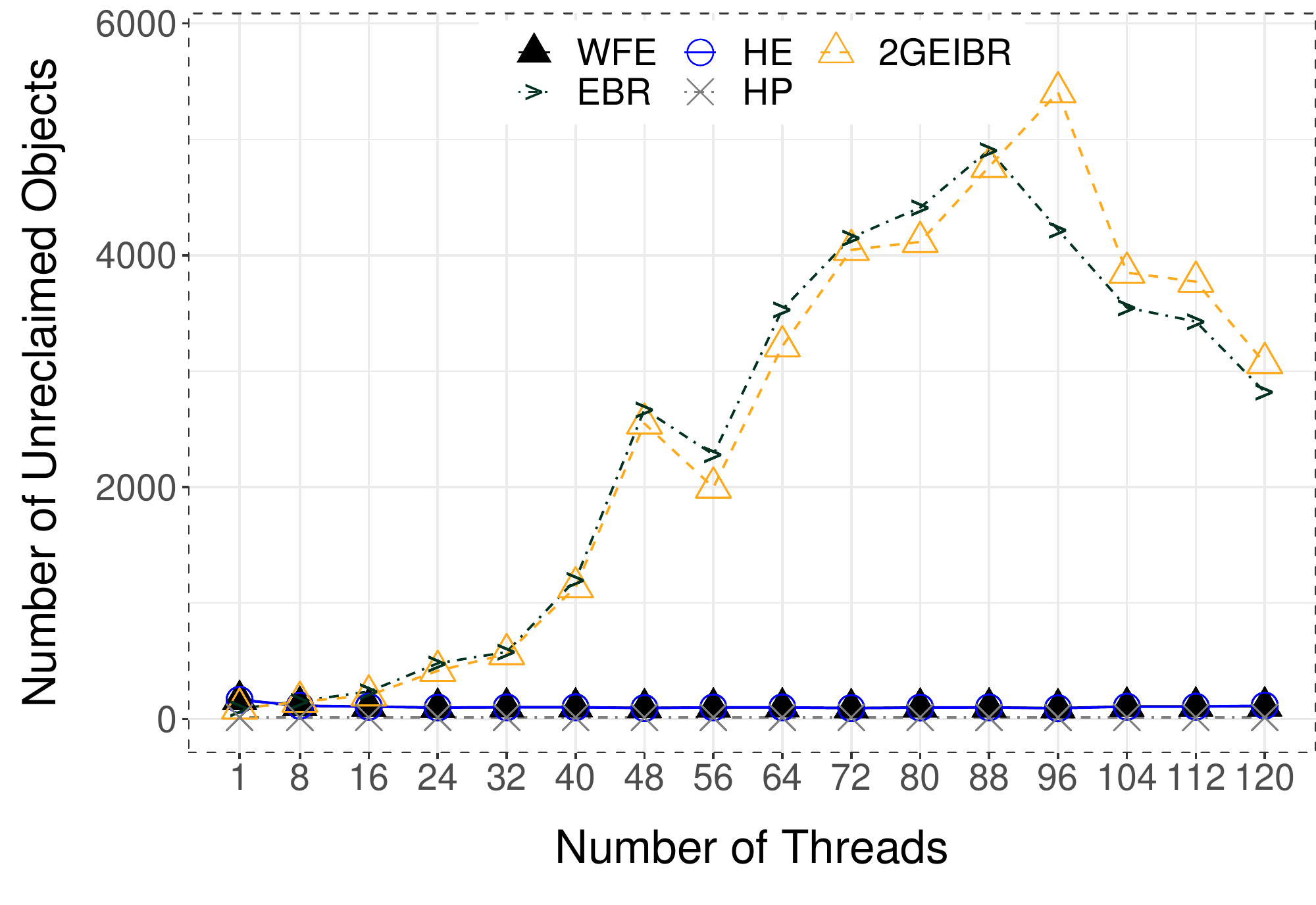}
\caption{CRTurn (unreclaimed objects)}
\label{fig:crturn_unrec}
\end{subfigure}%
\caption{Wait-free queues (\emph{50\% insert()} and \emph{50\% delete()}).}
\label{fig:wfqueue}
\end{figure*}

We performed all tests on an x86\_64 machine with 256GB of RAM and
four Intel Xeon~E7-8890~v4 (2.20GHz) processors, each with 24 cores. Processors have separate L1/L2 caches per each core and the L3 cache is shared across each processor. We pinned the first 24 threads to one processor, next 24 threads to another processor, and so on. We also disabled SMT (simultaneous multithreading), i.e., splitting one physical core into several  virtual cores, as it is typically recommended to disable SMT for more predictable measurements.

We ran the benchmark in~\cite{IBRPaper} that already implements
existing reclamation approaches and additionally
implemented our approach. The schemes include:
\begin{description}[leftmargin=.4cm]
\item[\textbf{WFE}:] our wait-free eras scheme presented in this paper.
\item[\textbf{HE}:] the hazard eras scheme~\cite{HEPaper}, which WFE extends.
\item[\textbf{HP}:] the classical hazard pointers scheme~\cite{HPPaper}.
\item[\textbf{EBR}:] the epoch-based reclamation scheme.
\item[\textbf{2GEIBR}:] the interval-based reclamation approach of~\cite{IBRPaper}; we used the 2GEIBR version which does not tag pointers.
\item[\textbf{Leak Memory}:] a baseline which just leaks memory, i.e., provides no memory reclamation.
\end{description}
We have fixed a potential race condition in HE's \emph{retire()}, which probably
went unnoticed in the original benchmark due to subtle differences in
\emph{retire()} between HE and IBR.

We compiled the benchmark, which is written in C++, using
g++ 8.3.0 (-O3 optimizations). Similar to~\cite{IBRPaper}, we used 
jemalloc~\cite{jemalloc} due to its better performance.

The goal of our evaluation is twofold. First, we wish to understand how universal our scheme is and its performance when used with common lock-free data structures. Second, we wish to understand the effectiveness of using our reclamation scheme with wait-free data structures while providing wait-free progress guarantees.
Since wait-free data structures are typically much harder to implement, we focused on a select few.

For lock-free data structures, we used the existing tests from the benchmark: a sorted \textit{Linked List}~\cite{HarrisList} (includes a modification from~\cite{HPPaper}), \textit{Natarajan BST} (binary search tree)~\cite{NatarajanTree}, 
and \textit{Hash Map}~\cite{HPPaper}.
For wait-free data structures, we extended the benchmark to implement Kogan-Petrank (\textit{KP})~\cite{kpWFQUEUE} and \textit{CRTurn}~\cite{pedroWFQUEUE} wait-free queues for all reclamation schemes; we based our implementation on the
existent code for Hazard Pointers~\cite{concurrencyFreaks}.
The original KP queue uses a garbage collector, for which
no known wait-free implementation exists; thus, we are the first
to evaluate the KP queue with wait-free reclamation.

In the evaluation, we used both write-dominated tests, where one half of all operations are insertions and the other half are deletions, as well as read-mostly tests, where $90\%$ of all operations are \emph{get()} and
the remaining $10\%$ are \emph{put()}.
Each data structure implements an abstract key-value interface
with the corresponding operations, i.e., \emph{insert()}, \emph{delete()},
\emph{get()}, and \emph{put()}.
Following the methodology from~\cite{IBRPaper}, each test measured a single data point by
pre-filling the data structure with 50K elements and then ran 10 seconds
(repeated 5 times). The key for each operation is randomly chosen
from the range $(0, 100000)$.

The benchmark allows tuning of certain parameters for each test. Specifically, the epoch counter for WFE, HE, EBR, and 2GEIBR is incremented after $n\times \nu$, where $n$ is the number of all active threads and
$\nu$ is the per-thread frequency of epoch or era increments. Similar to~\cite{IBRPaper}, we used $\nu=150$, which is large enough to avoid performance bottlenecks for the epoch counter increments in our setup. Similar to~\cite{IBRPaper}, the per-thread scanning frequency of retired lists is at least $30$ but depends on the algorithm due to differences in \emph{retire()}.
Finally, for WFE, we set the number of attempts on the fast-path to $16$.
Even if the number of attempts is that small, the slow path is taken
rarely. (We also tested our algorithm by forcing the slow path to
be taken all the time to validate that our implementation still works
correctly under stress conditions.)

\begin{figure*}[ht]
\begin{subfigure}{.5\textwidth}
\includegraphics[width=.99\textwidth]{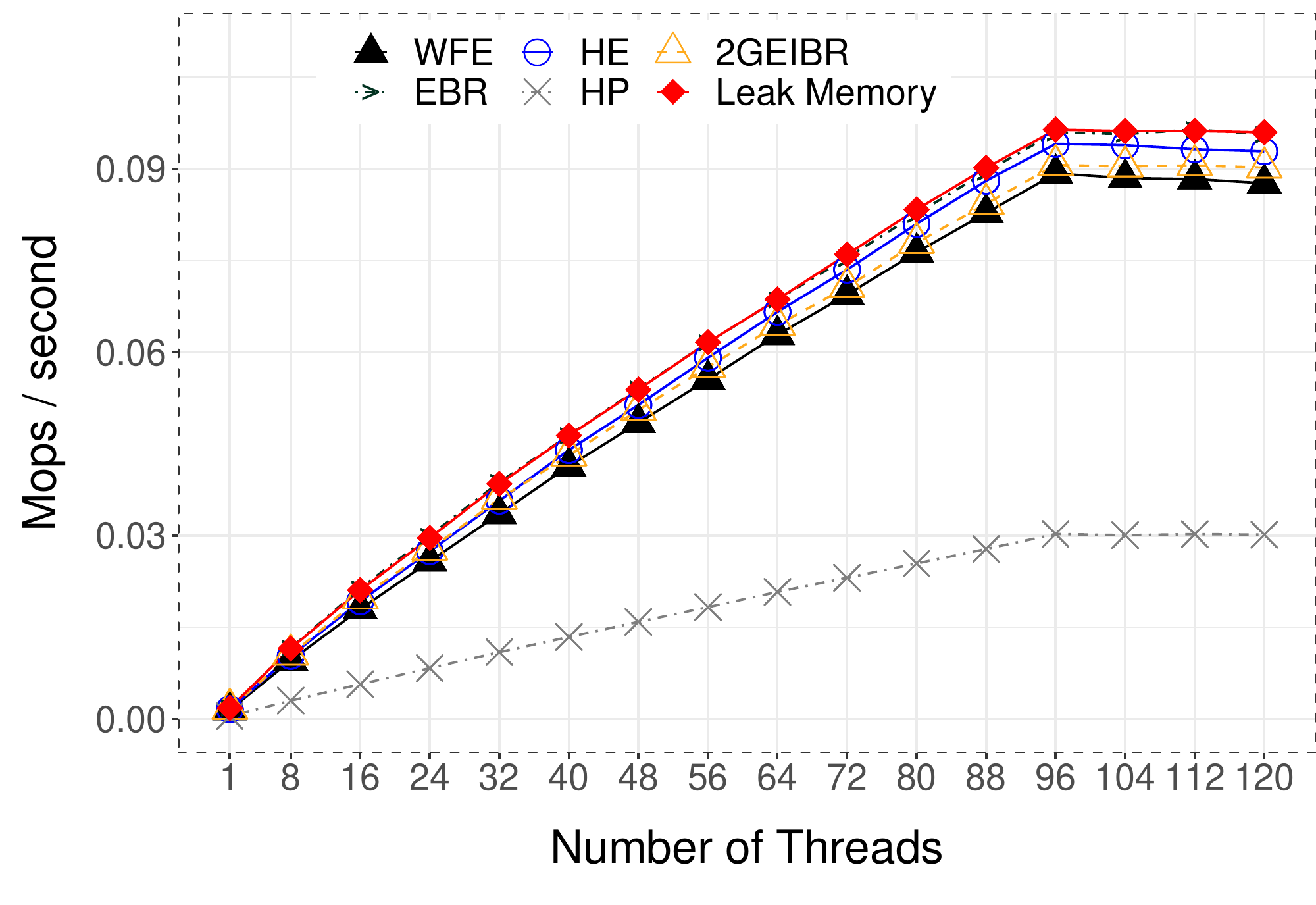}
\end{subfigure}%
\begin{subfigure}{.5\textwidth}
\includegraphics[width=.99\textwidth]{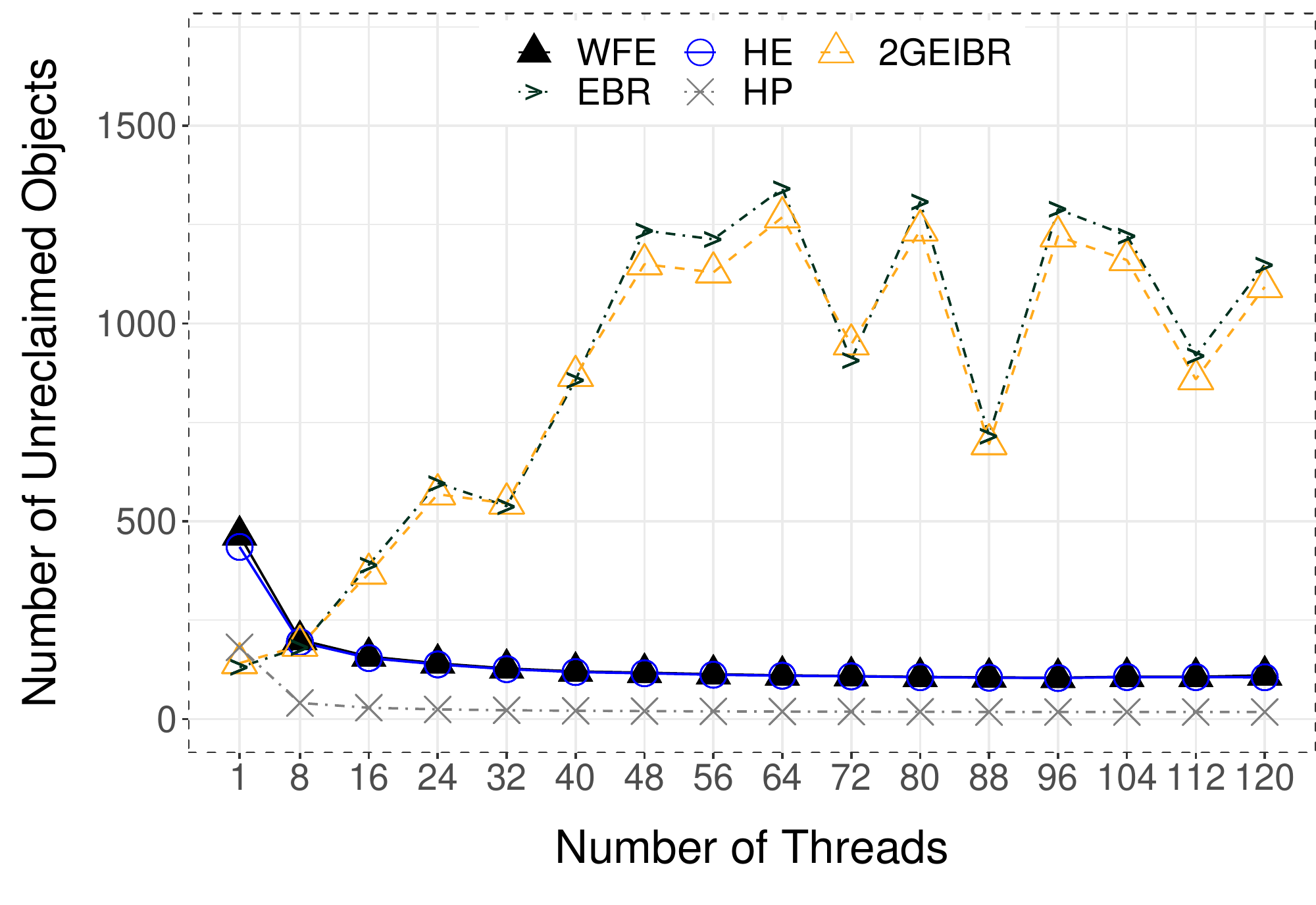}
\end{subfigure}%
\vspace{+2pt}
\caption{Linked List (\emph{50\% insert()} and \emph{50\% delete()}).}
\label{fig:list}
\end{figure*}

\begin{figure*}[ht]
\begin{subfigure}{.5\textwidth}
\includegraphics[width=.99\textwidth]{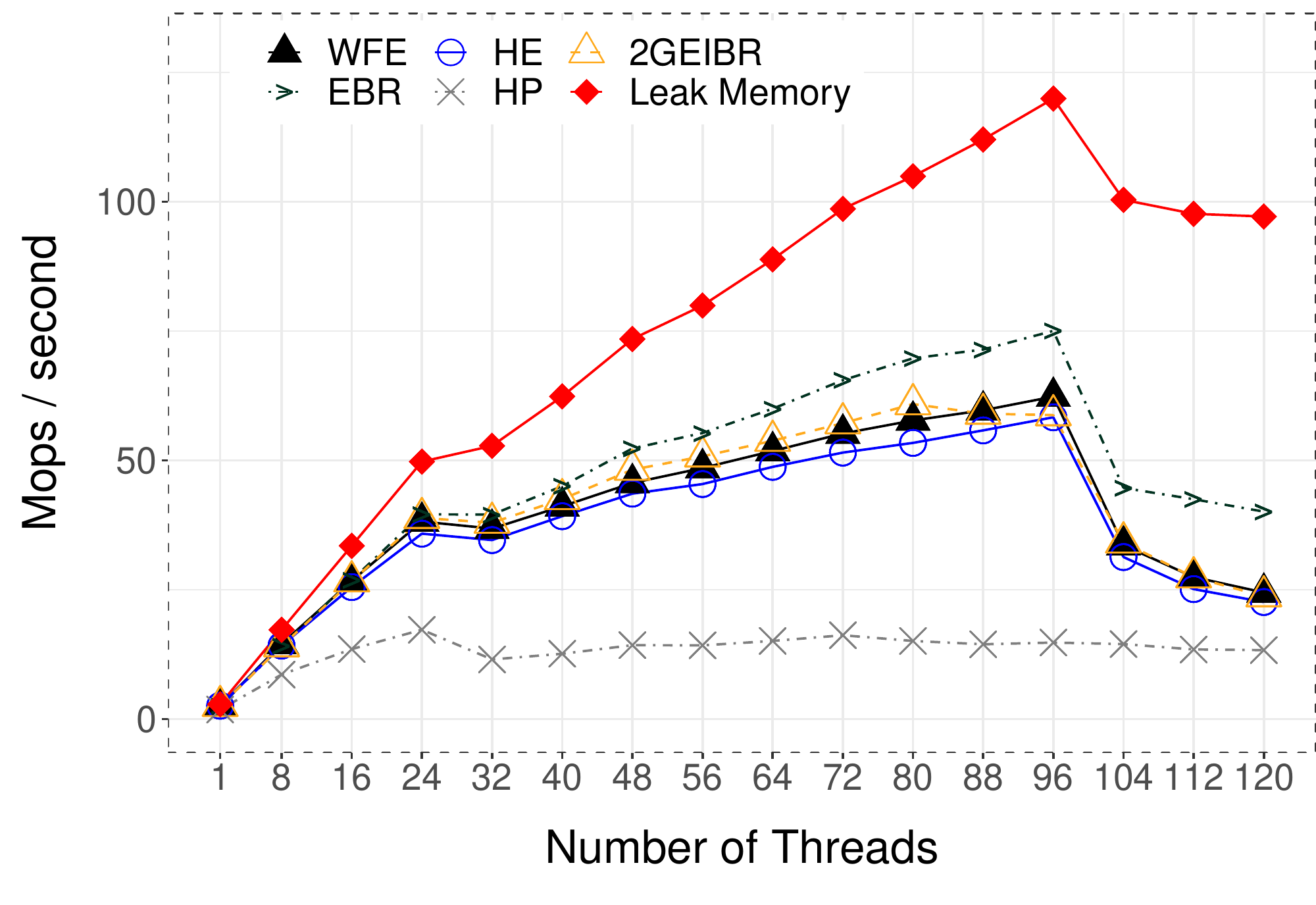}
\end{subfigure}%
\begin{subfigure}{.5\textwidth}
\includegraphics[width=.99\textwidth]{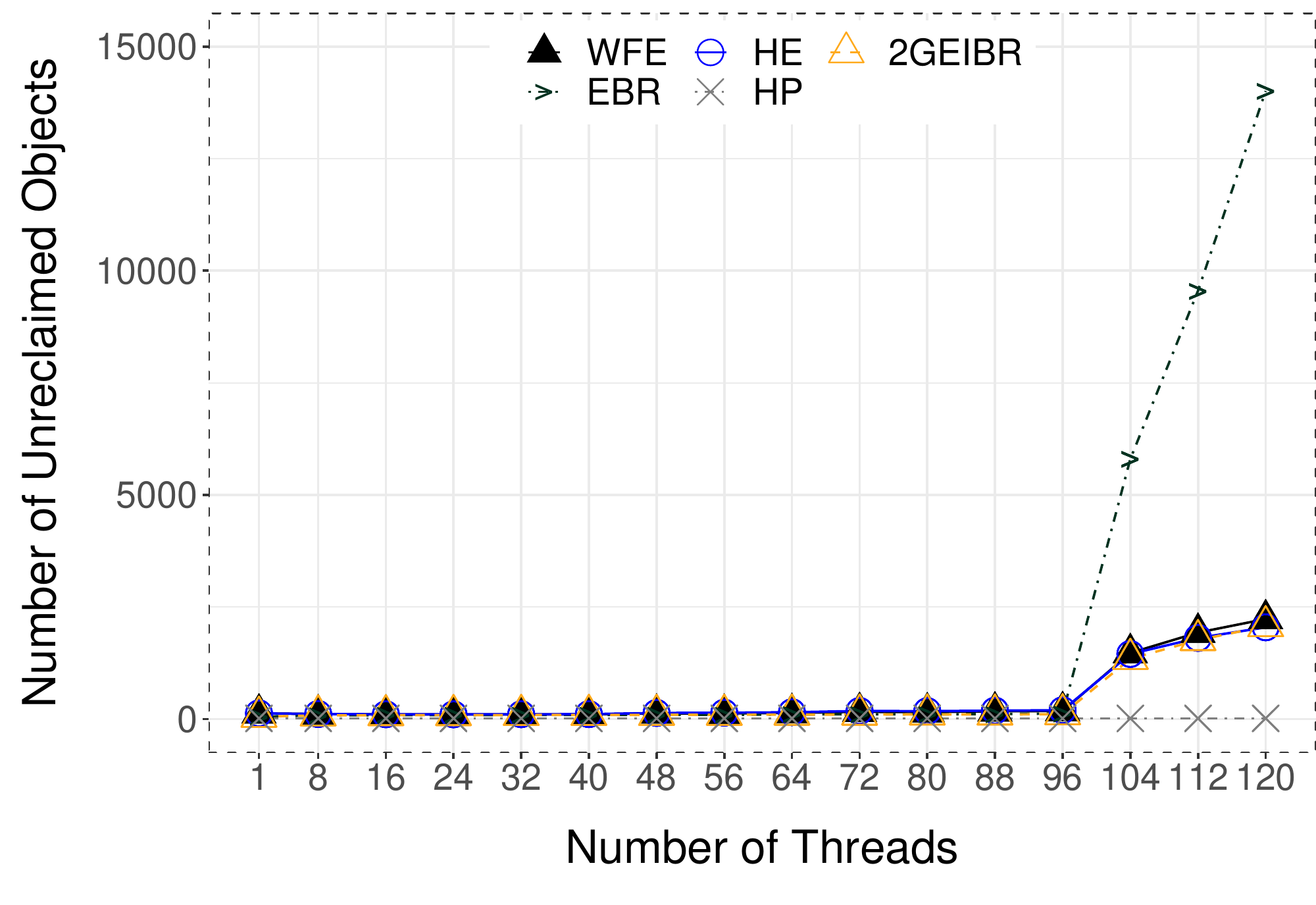}
\end{subfigure}%
\vspace{+2pt}
\caption{Hash Map (\emph{50\% insert()} and \emph{50\% delete()}).}
\label{fig:hash}
\end{figure*}

\begin{figure*}[ht]
\begin{subfigure}{.5\textwidth}
\includegraphics[width=.99\textwidth]{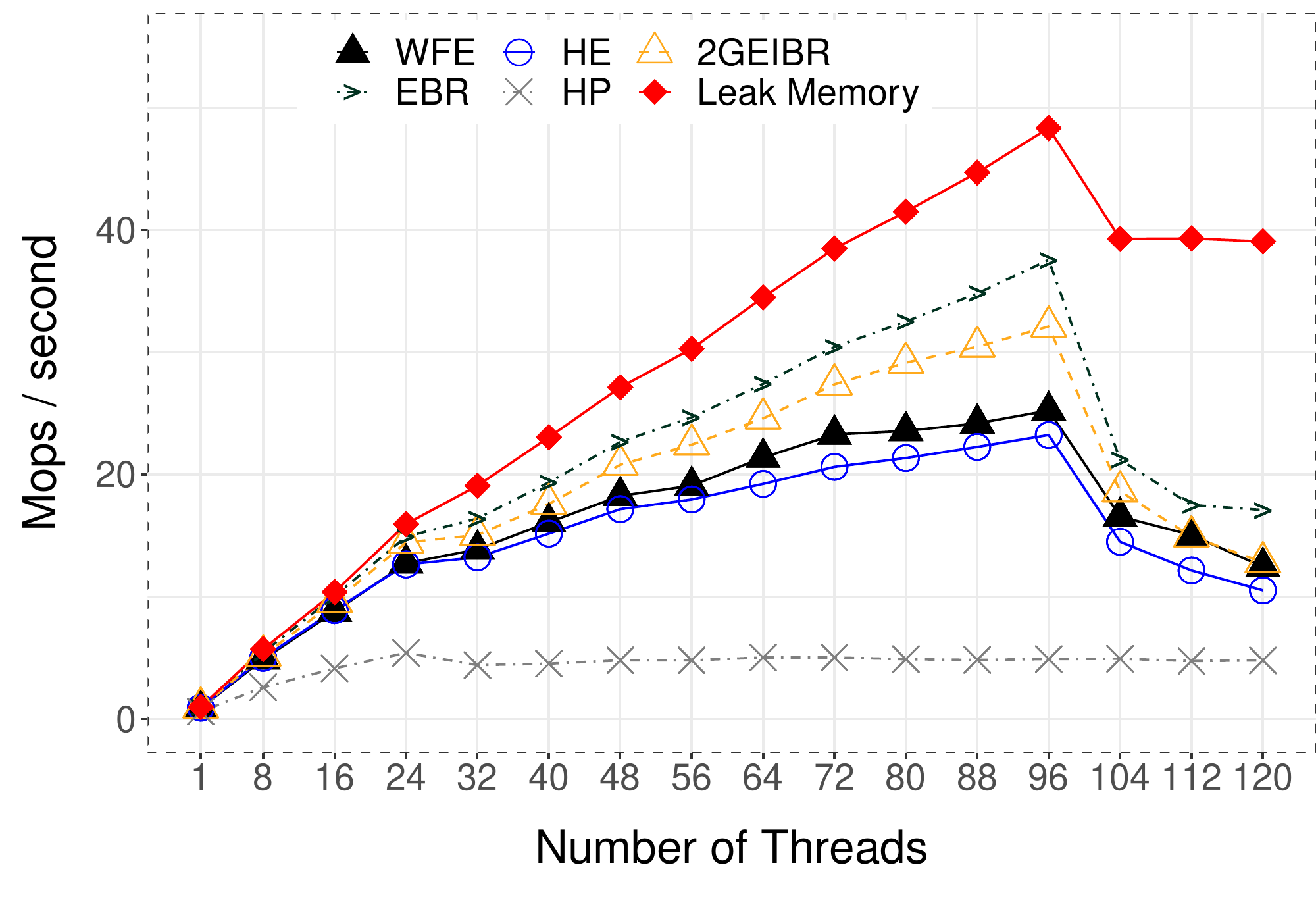}
\end{subfigure}%
\begin{subfigure}{.5\textwidth}
\includegraphics[width=.99\textwidth]{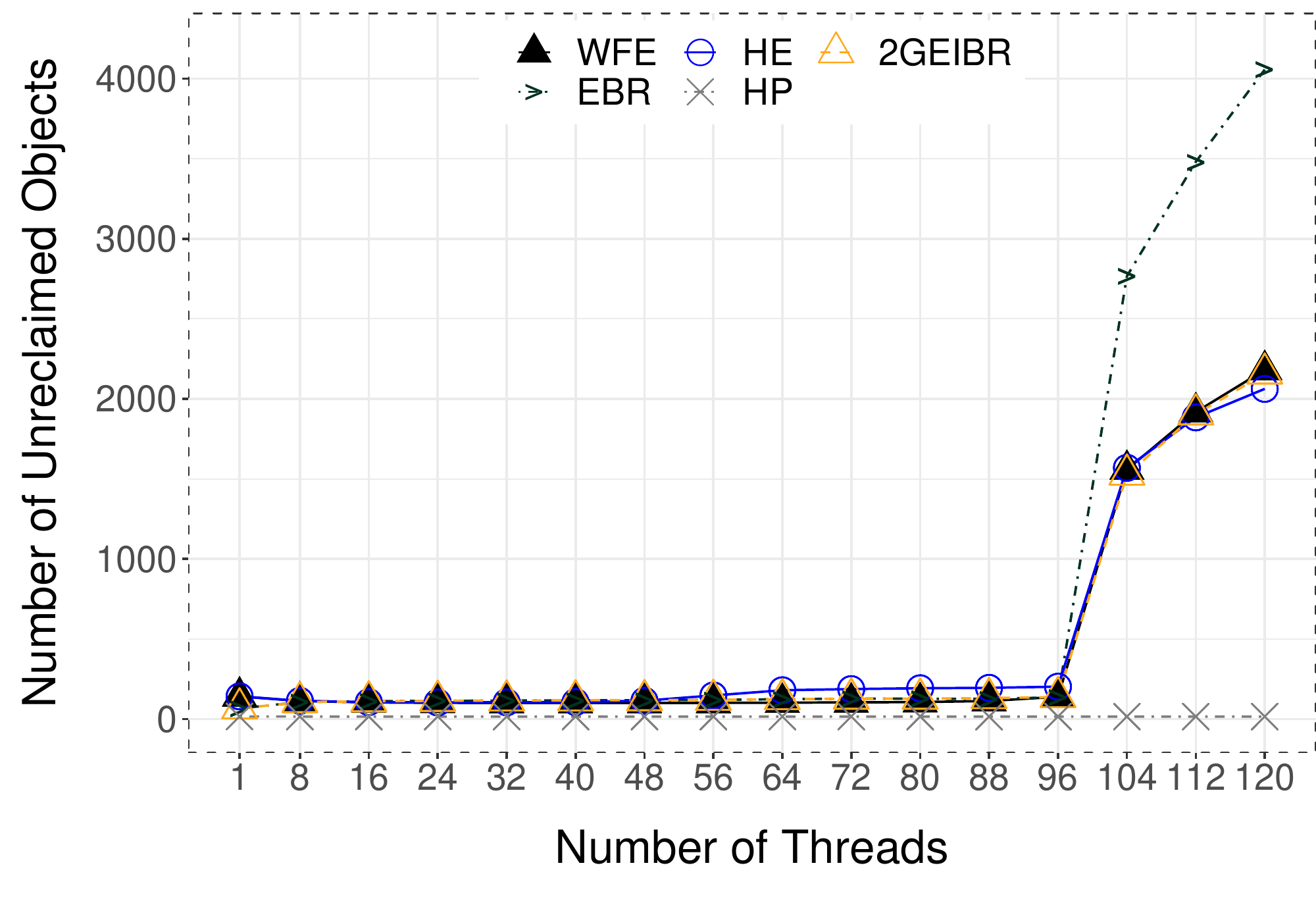}
\end{subfigure}%
\vspace{+2pt}
\caption{Natarajan BST (\emph{50\% insert()} and \emph{50\% delete()}).}
\label{fig:natarajan}
\end{figure*}

\begin{figure*}[ht]
\begin{subfigure}{.5\textwidth}
\includegraphics[width=.99\textwidth]{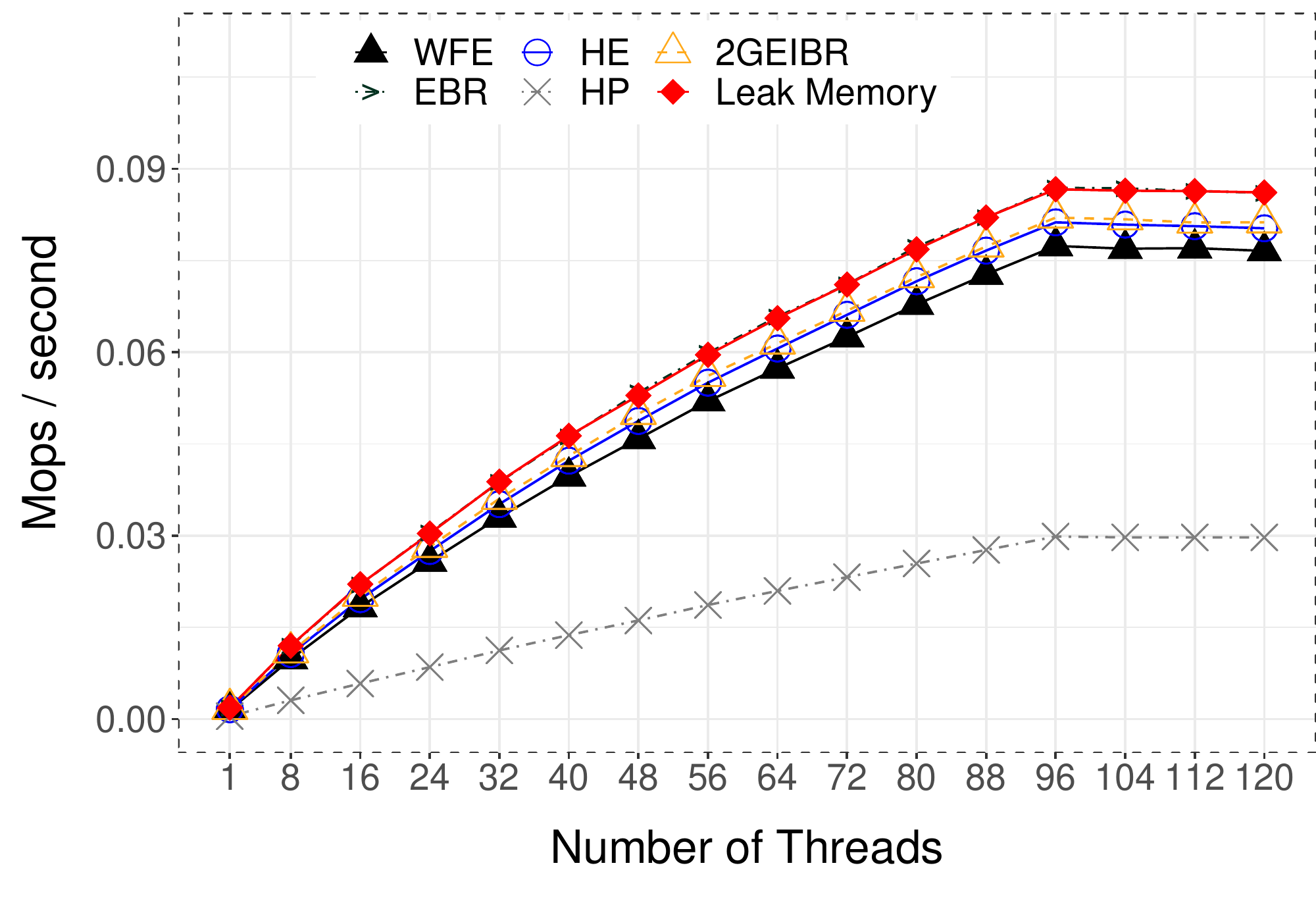}
\end{subfigure}%
\begin{subfigure}{.5\textwidth}
\includegraphics[width=.99\textwidth]{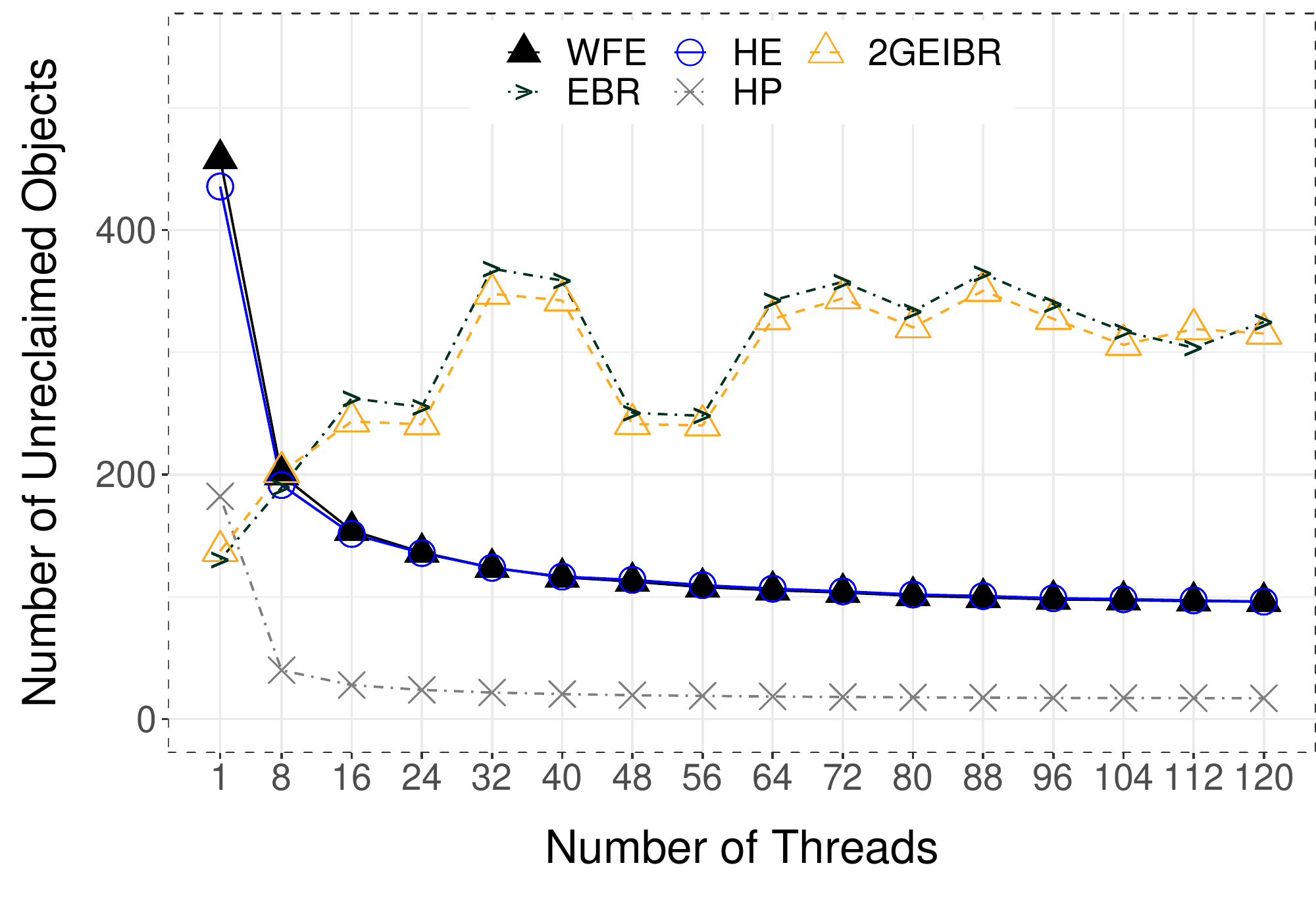}
\end{subfigure}%
\vspace{+2pt}
\caption{Linked List (\emph{90\% get()} and \emph{10\% put()}).}
\label{fig:list_read}
\end{figure*}

\begin{figure*}[ht]
\begin{subfigure}{.5\textwidth}
\includegraphics[width=.99\textwidth]{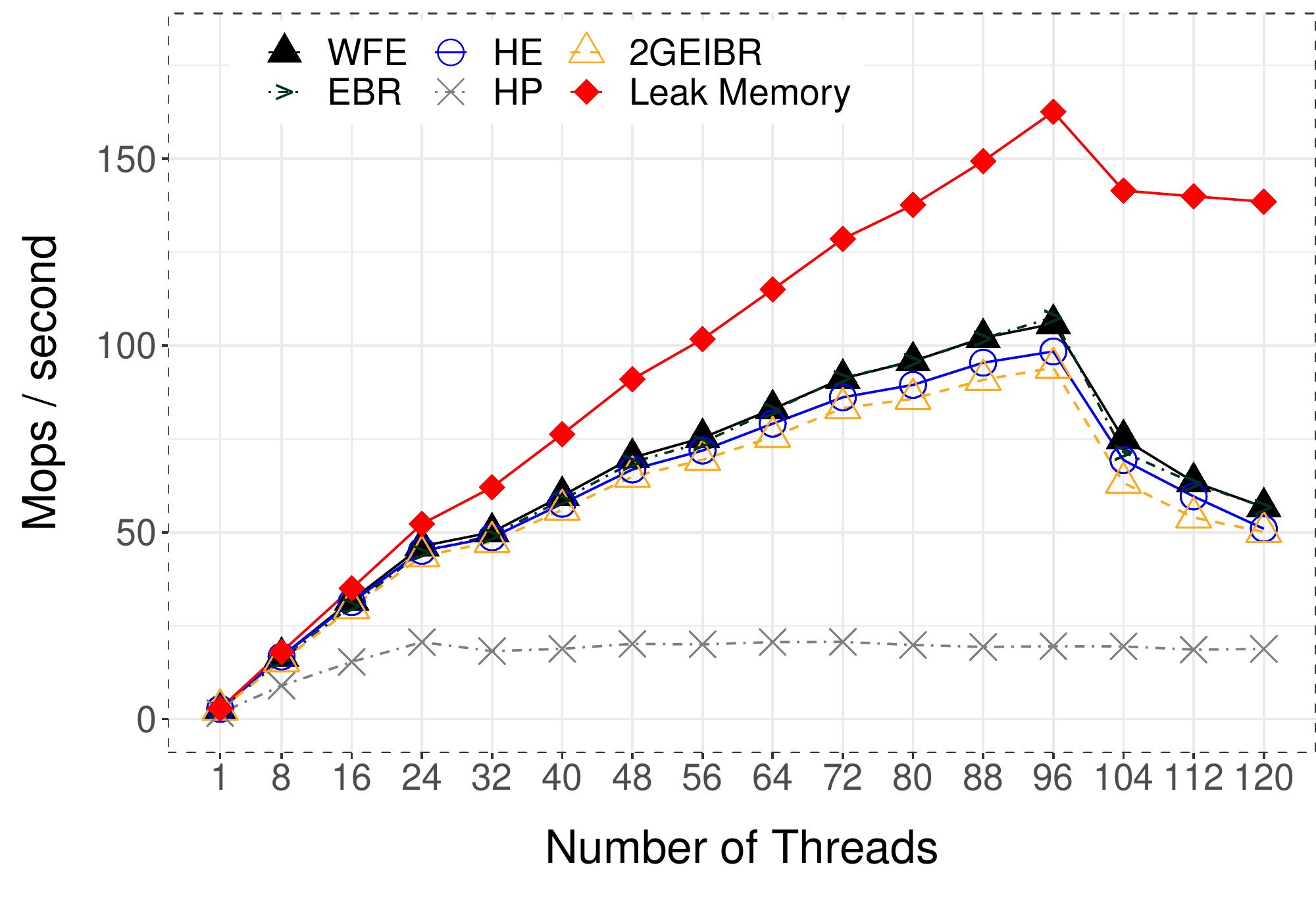}
\end{subfigure}%
\begin{subfigure}{.5\textwidth}
\includegraphics[width=.99\textwidth]{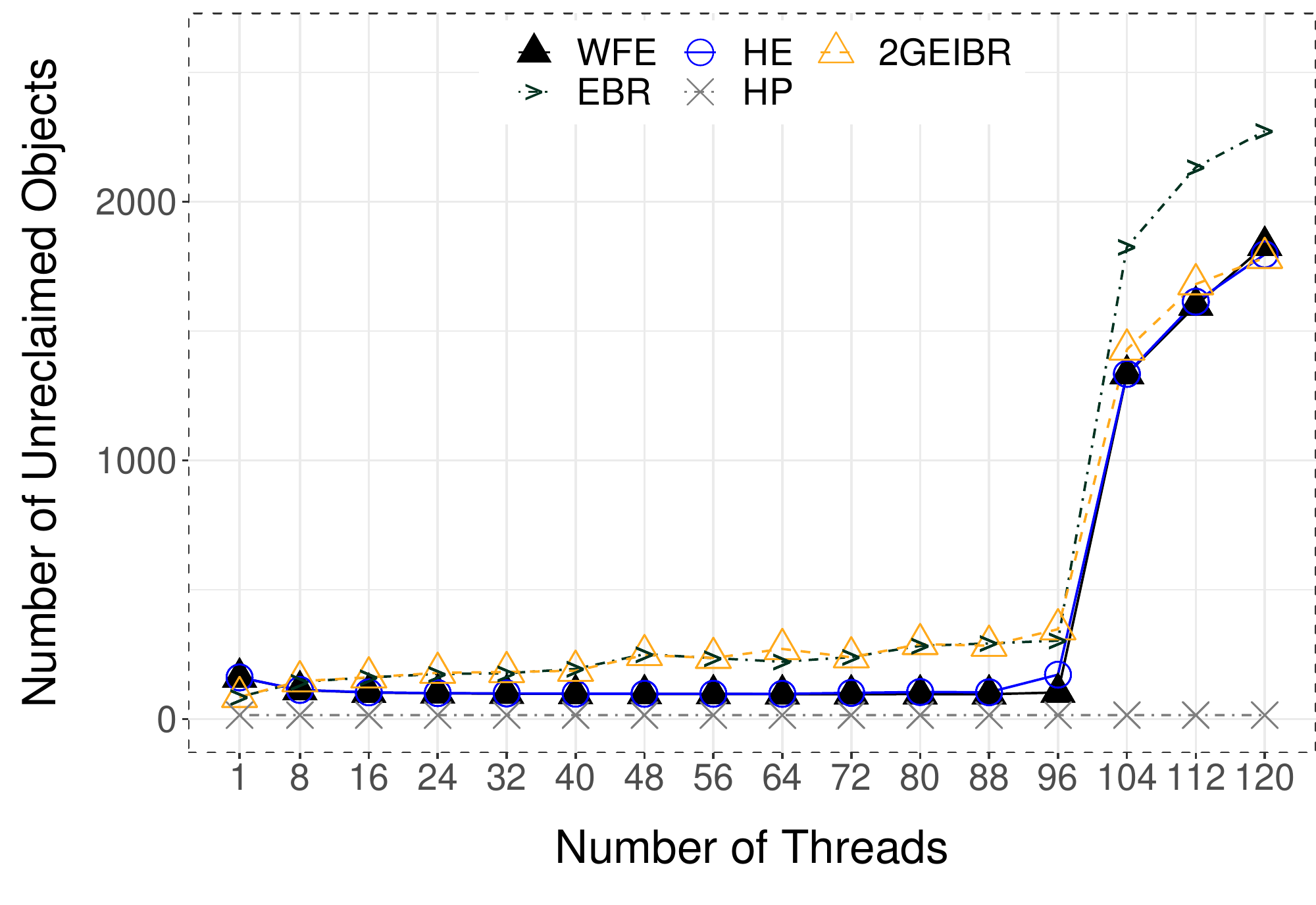}
\end{subfigure}%
\vspace{+2pt}
\caption{Hash Map (\emph{90\% get()} and \emph{10\% put()}).}
\label{fig:hash_read}
\end{figure*}

\begin{figure*}[ht]
\begin{subfigure}{.5\textwidth}
\includegraphics[width=.99\textwidth]{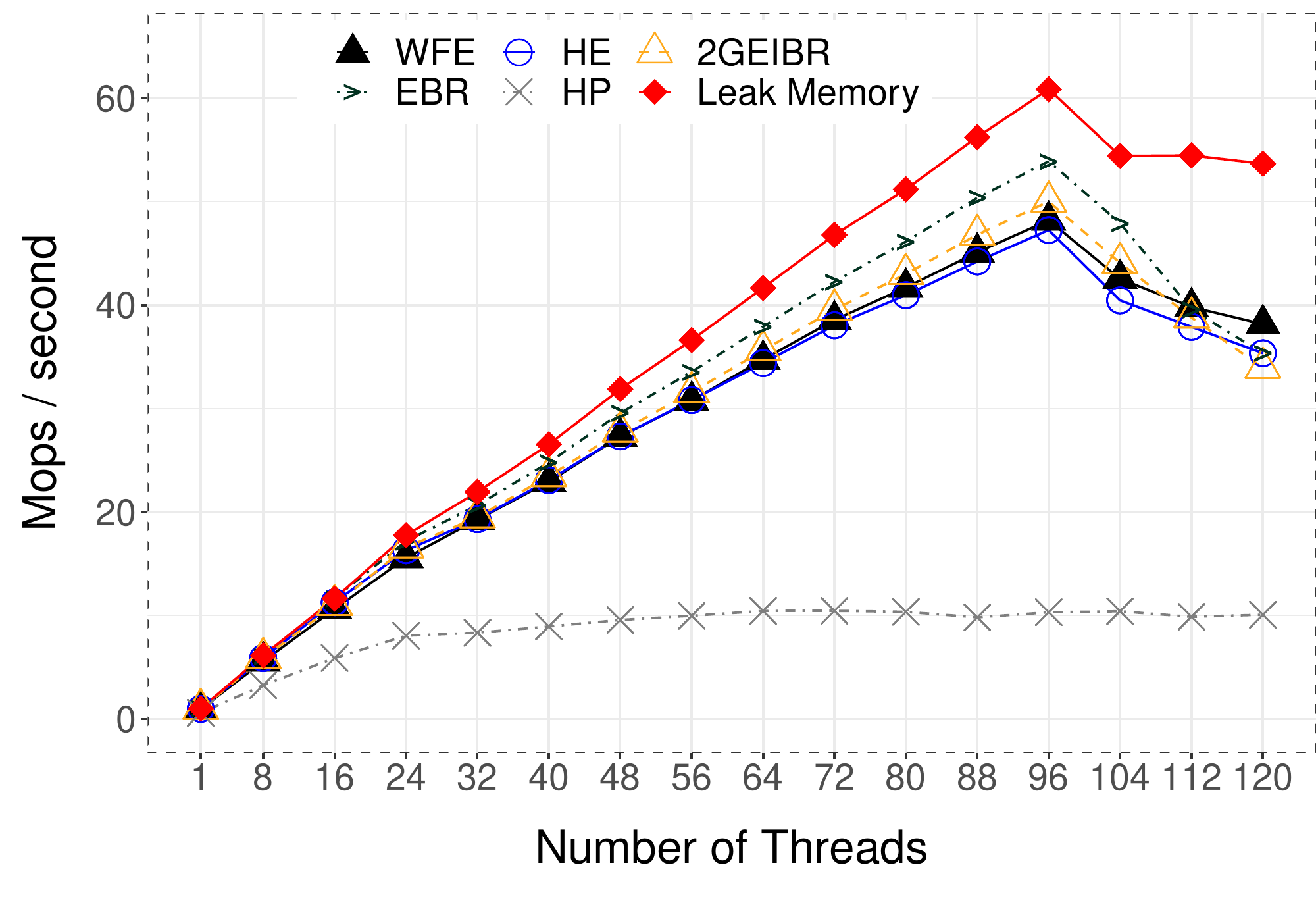}
\end{subfigure}%
\begin{subfigure}{.5\textwidth}
\includegraphics[width=.99\textwidth]{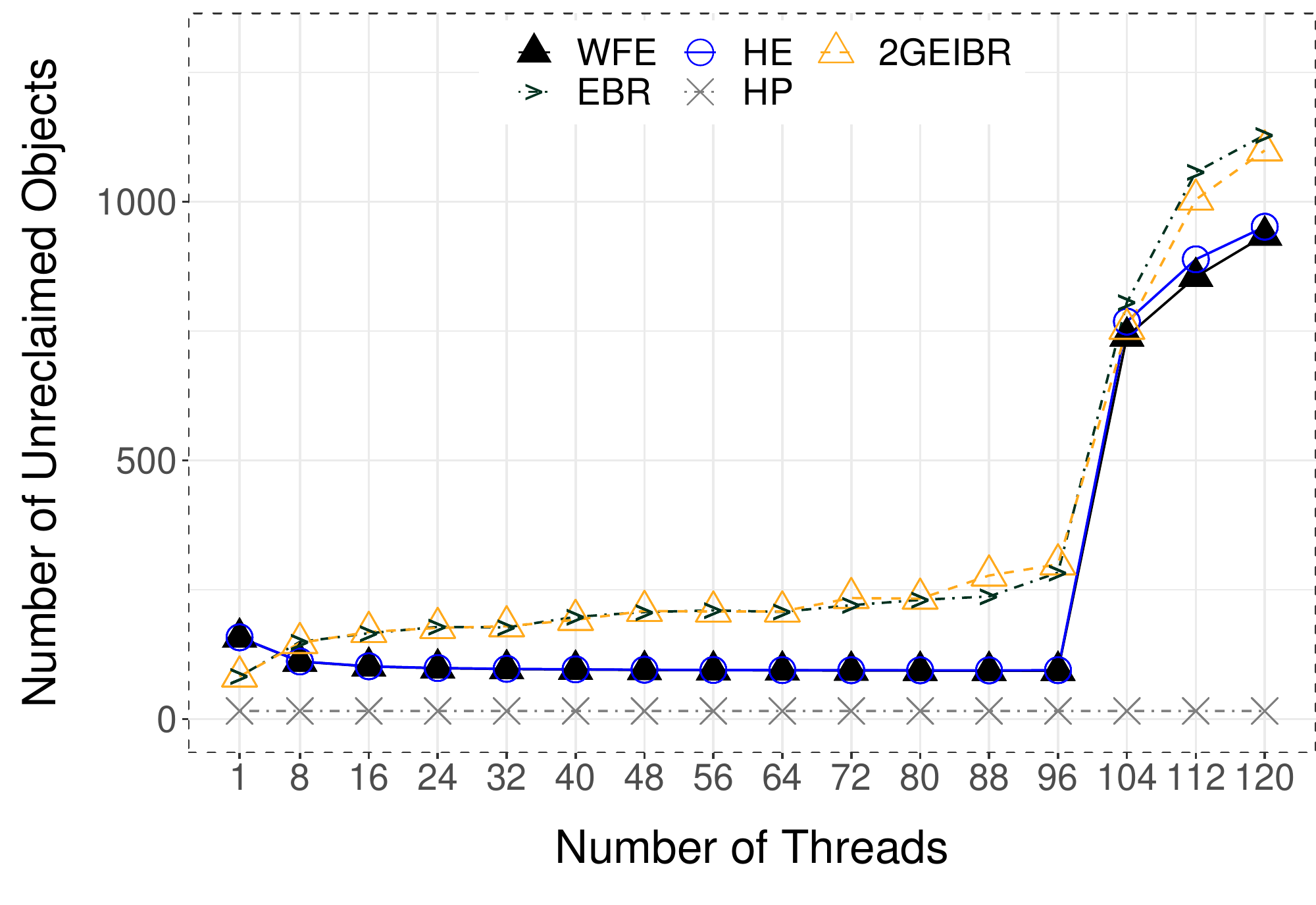}
\end{subfigure}%
\vspace{+2pt}
\caption{Natarajan BST (\emph{90\% get()} and \emph{10\% put()}).}
\label{fig:natarajan_read}
\end{figure*}

Figure~\ref{fig:wfqueue} shows the throughput for wait-free queues.
For queues, typically, only \emph{insert()} and \emph{delete()} operations
make sense. Thus, we only present results for the write-dominated workload.
Generally speaking, both KP and CRTurn queues show similar throughput (Figures~\ref{fig:kp_thru}~and~\ref{fig:crturn_thru}) for all schemes except HP, which is sometimes slower. Queues generally do not scale very well.
With respect to the average number of unreclaimed objects per
operation, a metric which measures the memory reclamation speed, we
found that HE and WFE are slightly less efficient than HP, but better than EBR and 2GEIBR (Figures~\ref{fig:kp_unrec}~and~\ref{fig:crturn_unrec}).

For Linked-List (Figures~\ref{fig:list}~and~\ref{fig:list_read}), we found that EBR is margina\-lly
better than all other schemes except HP, which exhibits the worst performance.
WFE is marginally worse than HE. Our investigation has shown that
an average linked-list traversal operation is long (due
to sequential search) and dereferences many pointers. Consequently,
(inlined) \emph{get\_protect\-ed()} calls have to be very efficient. Since
WFE also needs to call the slow-path procedure, higher register pressure forces the compiler to generate less efficient code.
However, this overhead is still quite insignificant. If desired, the
overhead can be eliminated by customizing a calling convention for
the slow-path procedure: the customized call can reduce the register pressure on the fast path by ensuring that no registers
need to be saved by callees.
With respect to the average number of unreclaimed objects per
operation, except for smaller concurrency, HE and WFE are less efficient than HP, but better than EBR and 2GEIBR.

For Hash-Map (Figures~\ref{fig:hash}~and~\ref{fig:hash_read}), we found that EBR is at the same level as or
marginally outperforms other schemes (except HP which has the worst
performance). WFE is at the same level as HE or even marginally outperforms it.
Natarajan BST (Figures \ref{fig:natarajan}~and~\ref{fig:natarajan_read})
shows similar trends, except that the gap between EBR and
HE (or WFE) is larger.
For the write-dominated tests, we found that EBR is
significantly less memory efficient than all other
schemes when threads are preempted.

Overall, the results show that WFE's performance is comparable to
that of other high-performant non-blocking algorithms such as HE and 2GEIBR.
At the same time, WFE provides the stronger wait-free progress guarantee.

\section{Related Work}
\label{sec:related}

The literature presents a number of memory reclamation techniques for concurrent data structures. We classify them into different categories.

The \textit{first category} includes schemes that use epochs such as EBR~\cite{epoch1}, which originates from RCU~\cite{Mckenney01read-copyupdate}. In these approaches, a thread records the global epoch value to make a reservation at the beginning of an operation. Then, at the end of the operation, it resets the reservation. A related approach, quiescent-state reclamation~\cite{epoch2}, increments the counter after all threads transition through a state where they hold no pointers. Stamp-it~\cite{Poter:2018:BAS:3210377.3210661} extends EBR to bound reclamation cost to $O(1)$.

Due to potentially unbounded memory usage, all these techniques can be blocking when threads are preempted or stalled. Hazard Eras~\cite{HEPaper} and IBR~\cite{IBRPaper} implement
a non-blocking epoch-based approach. Our Wait-Free Eras scheme extends
Hazard Eras, but the same idea can also be straightforwardly applied to certain versions of IBR, e.g., 2GEIBR.

The \textit{second category} includes reclamation techniques that deal with pointers. Hazard Pointers~\cite{HPPaper} record all pointers that are currently
in use. The technique has a relatively high overhead due to its extensive use of memory barriers for each pointer dereference. The original paper
presents Hazard Pointers as a ``wait-free'' scheme. However, the difficult
part comes during traversals, when an advertised pointer changes and needs to
be read again.
(Granted,~\cite{HPPaper} sidesteps an explicit \emph{get\_protected()}
operation, which we discuss in this paper.)
Pass-the-buck~\cite{Herlihy:2002:ROP:645959.676129,Herlihy:2005:NMM:1062247.1062249} uses a similar model.
Another technique, drop-the-anchor~\cite{Braginsky:2013:DAL:2486159.2486184}, is designed specifically for linked-lists, 
and outperforms hazard pointers. This approach, however, does not seem to be directly applicable to other data structures.
Optimistic Access~\cite{Cohen:2015:EMM:2755573.2755579} is more general, and leverages a ``dirty'' flag instead of
publishing hazard pointers, but requires data structures to be written in a ``normalized form.'' Automatic Optimistic Access~\cite{Cohen:2015:AMR:2814270.2814298} relies on a data structure-specific garbage collector
to make reclamation more automatic, but still requires data structures to be written in a normalized form.  FreeAccess~\cite{Cohen:2018:DSD:3288538.3276513} forgoes this requirement by extending the LLVM compiler
to make the process fully automatic.

We considered making Hazard Pointers wait-free. Just like Hazard Eras,
the Hazard Pointers scheme is also mostly wait-free except
the \emph{get\_protected()} operation. However, Hazard Pointers use pointers
instead of epochs, and there does not seem to be a straightforward way to adopt our approach for Hazard Pointers or any other technique that tracks pointers in the same manner.
 
The \textit{third category} is reference counting~\cite{refcount1,refcount2,refcount3,refcount4}.
A memory object is reclaimed when the reference counter, associated with
the object, reaches zero. General-purpose reference counting is typically
lock-free and very intrusive: a pointer
and a reference counter are adjacent and need to be both atomically updated
using WCAS. Reference counting typically performs poorly on
read-dominated workloads, as read operations must update reference counters,
which requires additional memory barriers.
Hyaline~\cite{Hyaline}, an
approach that implements distributed reference counting, forgoes this
requirement and achieves excellent performance.
However, Hyaline is still only lock-free.

Reclamation schemes of the \textit{fourth category} use special OS mechanisms.
For these reclamation schemes, it is generally hard to guarantee 
non-blocking behavior since an OS can use locks internally.
DEBRA+~\cite{DEBRAPaper} uses signals to add fault tolerance to EBR, i.e., a
stalled thread which does not advance its epoch is interrupted by an OS signal.
This signal triggers a restart operation, for which special recovery code
needs to be written.
ThreadScan~\cite{ThreadScan} implements a mechanism which uses a shared
\textit{delete buffer}. A thread triggers reclamation by sending signals
to all other
active threads, which scan their stacks and registers to mark deleted
nodes in the shared buffer if they are still used.
ForkScan~\cite{ForkScan} also uses signals as well as copy-on-write OS
optimizations for fork(2). A reclaimer thread creates a child process which
contains a ``frozen'' memory snapshot; this process can scan deleted nodes in
parallel.
QSense~\cite{Balmau:2016:FRM:2935764.2935790} uses quiescent-state reclamation in its fast path, but hazard pointers back it
up when threads do not respond. It also relies on the specific behavior of an OS scheduler, which needs to context switch threads periodically.
Yet another approach~\cite{10.1145/2926697.2926699} uses an OS page fault
mechanism to alleviate the costs related to memory barriers in hazard pointers.

Software transactional memory (STM) can simplify concurrent
programming and memory reclamation. OneFile~\cite{OneFile} is a recent
wait-free STM implementation with its own, STM-specific memory reclamation. While OneFile's framework enables the implementation of a wide range of wait-free algorithms by directly converting sequential data structures, customized wait-free data structures can often better
utilize parallelism and achieve higher overall performance. 
Our work, which presents a universal memory reclamation scheme for such
\textit{arbitrary} data structures, closes this gap.

Hardware transactional memory (HTM) mechanisms are also widely used
in concurrent programming.
For example, reference counting can be accelerated using HTM~\cite{HTMRefcntPaper}.
Another approach, StackTrack~\cite{StackTrack}, encapsulates read operations in HTM transactions;
a concurrent thread will abort an HTM transaction when an object is no longer live but still in use.
Since typical HTMs lack wait-free progress guarantees, it is not clear how to use HTMs for wait-free memory reclamation.

\section{Conclusion}

We presented the first practical wait-free memory
reclamation scheme called \emph{Wait-Free Eras} (WFE). Like Hazard Eras, it achieves great performance while providing compatibility with Hazard Pointers. Unlike Hazard Eras or any other existing universal technique, WFE guarantees
that \textit{all} memory reclamation operations are wait-free. 

Since WFE eliminates a serious obstacle in wait-free programming, i.e., wait-free memory reclamation, we hope that it spurs further research in this area and the practical adoption of wait-free algorithms.

\section*{Availability}
WFE's code is available at \url{https://github.com/rusnikola/wfe}.

\section*{Acknowledgments}
We thank the anonymous reviewers for their insightful comments, which helped improve the paper. This work is supported in part by ONR under grants N00014-16-1-2711 and N00014-18-1-2022 and AFOSR under grant FA9550-16-1-0371.

\bibliography{lockfree}

\end{document}